\numberwithin{equation}{section} \pagestyle{plain}
\newtheorem{theorem}{Theorem}[section]
\newtheorem{corollary}{Corollary}[section]
\newtheorem{proposition}{Proposition}[section]
\newtheorem{definition}{Definition}[section]
\newtheorem{remark}{Remark}
\newtheorem{example}{Example}
\def\xx{\boldsymbol{x}}
\def\SS{\boldsymbol{S}}
\def\BB{\boldsymbol{B}}
\def\NN{\mathbb{N}}
\def\RR{\mathbb{R}}
\def\l{\mathcal{L}}
\def\dym{\mathcal{D}(\mathcal{Y},\mu)}
\def\dy{\mathcal{D}(\mathcal{Y},1)}
\DeclareMathOperator{\rank}{rank}
\begin{document}

\author{ R.FONTANA \\ \textit{\ 
Department of Mathematical Sciences G. Lagrange,} \\ {Politecnico di
Torino.}
\\ P. SEMERARO \\ \textit{\ 
Department of Mathematical Sciences G. Lagrange,} \\ {Politecnico di
Torino.}}
\title{Measuring distribution risk in discrete models}
\maketitle

\begin{abstract}
Model risk measures consequences of choosing a model in a class of possible alternatives.
We find analytical and simulated bounds for payoff functions on classes of plausible alternatives of a given discrete  model.
We  measure  the impact of choosing a risk-neutral measure on convex derivative pricing in incomplete markets. We find analytical bounds for prices of European and American options in the class of all risk-neutral measures, and we also find simulated bounds for given classes of perturbations of the minimal martingale equivalent measure.

\noindent \textbf{keywords}: Model risk, discrete distribution, convex polytope, discrete pricing models, incomplete multinomial models, risk-neutral evaluation.
\end{abstract}

\section{Introduction}

Mathematical models are essential in order to develop quantitative methods in finance that are used, for example, in  pricing, risk management and portfolio selection. As a consequence, model risk cannot be ignored and must be quantified.
A statistical model in finance has two main ingredients: the distribution of a random source, and a payoff function describing how risk factors impact on the quantities of interest (\cite{breuer2016measuring}).  In this paper we consider the payoff function as given and focus on what \cite{breuer2016measuring} refers to as distribution model risk, that is the risk associated with a wrong choice of a random source distribution.

Distribution model risk aims to quantify the impact on a payoff function  of working with a wrong  distribution in the class of plausible distributions for a random phenomenon.
Given a random variable $Y$ in a space of random variables $\chi$,  let  $\rho:\chi\rightarrow \RR$ be  a law invariant functional, i.e. if $X$  and $Y$ have the same distribution then $\rho(X)=\rho(Y)$ (\cite{bellini2021law}). The functional $\rho$ can be  a measure of risk as well as the payoff of a financial derivative. The cost of a wrong choice of the distribution of $Y$ in a class $\mathcal{P}$  of plausible distributions for $Y$ can be measured as the range spanned by $\rho(Y)$ across $\mathcal{P}$.
As a consequence, the identification of a class of plausible alternatives for the distribution of $Y$ and the ability to find a range for a functional defined on it are the two key issues to address in distribution model risk.

This paper studies distribution model risk for all the cases where the source of risk is a random variable $Y$ with discrete support  included in  $\mathcal{Y}=\{y_1,\ldots, y_d\}$ .
 As the largest plausible class of  discrete distributions for $Y$ we consider the class $\dym$  of distributions with support included in  $\mathcal{Y}=\{y_1,\ldots, y_d\}$ and with given mean $\mu$. We therefore do not consider the risk associated with the mean.

The mathematical foundation of this work is the   geometrical representation of the class $\dym$ as a convex polytope and the analytical closed form of its extremal points. We  provide this  representation by generalizing  the results in \cite{fontana2021model}.  Furthermore, we find  analytical bounds in $\dym$ of a class of  functionals $\rho$ commonly used in pricing and risk measurement. The range spanned by a functional $\rho$ across $\dym$ is analytical and is a measure of the distribution model risk  associated with the whole class $\dym$.
However, in some cases the class $\dym$,  the range spanned by $\rho$ or both of these are too wide. The geometrical characterization of $\dym$ allows us to overcome these issues by simulation. Indeed, the full characterization of the class allows us not only to simulate from the class, but also to simulate from subclasses properly defined. Given a subclass $\mathcal{P}_Q\subset \dym$ of plausible alternatives for $Q\in \dym$ we can find empirically the distribution across $\mathcal{P}_Q$  and the bounds in $\mathcal{P}_Q$ of a wide class of functionals.
The problem of defining a proper subclass of alternative distributions to a given $Q$ does not have a theoretical solution and depends on the specific context. A possible choice that we consider is the one proposed in \cite{breuer2016measuring} to take a ball of distributions, defined in terms of relative entropy, centered at $Q$.

 In this paper we have two applications in mind: the first is pricing in incomplete models and  the second is  credit portfolio management.
The latter  application is an extension of the model risk analysis performed in \cite{fontana2021model}, where we relax the assumption of an equally weighted portfolio.  Removing the assumption of equal weights, the loss $L$ has a probability mass function in $\dym$.  The results discussed here can be used to measure the risk associated with a wrong choice of the  joint default distribution.
However, we  leave the development of  this application  to future research and  we focus on the application to pricing.

When closed form solutions for derivative prices are not available numerical methods have to be used. One method consists of approximating the underlying dynamics. In the classical Cox, Ross and Rubinstein Binomial model \cite{cox1979option}, the lognormal price dynamics is approximated  with a binomial tree. Alternative approximations based on multinomial lattices are proposed, for example in \cite{kamrad1991multinomial}, to account for asymmetry and kurtosis of logreturns, as discussed in \cite{ssebugenyi2013minimal}.  Multinomial lattices  are workable models used to price derivatives, including exotic derivatives, but they lead to incomplete markets. As discussed in  \cite{cont2006model} there is a risk associated with the choice of a risk-neutral martingale measure used to price, since a wrong choice results in  wrong prices. Our knowledge of the class of discrete distributions allows us to measure the impact on prices of a wrong choice of the martingale equivalent pricing measure in a discrete pricing model.
We consider a market $\mathcal{M}$ with a risky asset and a riskless one and we  model the risky asset dynamics with a
multinational recombining tree to account for asymmetry and kurtosis in  logreturns, as discussed in \cite{ssebugenyi2013minimal}.

Our main result is to find analytical  bounds for the prices of convex derivatives in incomplete markets, using the geometrical structure of the class of risk-neutral probabilities. Furthermore, by uniform simulation we can find empirical bounds in smaller classes of plausible choice of risk-neutral measures, for example a class of perturbations of the minimal martingale equivalent measure.
We provide explicit analytical bounds for the case of European call and put options and for the more interesting case of American put options.
We also provide a first application example on real data.

The paper is organized as follows. Section \ref{Sec:distrrisk}  provides the geometrical representation of the class of discrete distributions with given mean and provides analytical bounds for a wide class of functional $\rho$.  These results are applied to pricing models in Section \ref{Distr:price} that provide analytical and empirical  bounds for derivative prices in multinomial models. Although the construction of an approximating tree for a continuous model is beyond the scope of this work, Section \ref{real} presents a simple application on real data, where the multinomial parameters are chosen to fit the sample  moments of real logreturns. Section \ref{credit} concludes and briefly describes the future development of the application to credit risk portfolio management.

\section{Distribution risk}\label{Sec:distrrisk}
This section introduces a measure of distribution model risk  for discrete models. According to \cite{breuer2016measuring} we are interested to measure the risk associated with uncertainty knowledge of a random variable distribution that belong to a class of plausible alternatives. We focus on discrete models, thus we consider the class $\mathcal{D}(\mathcal{Y}, \mu)$ of discrete probability mass functions (pmfs)   with support in  $\mathcal{Y}=\{y_1,\ldots, y_d\}$ and mean $\mu$, without loss of generality we assume  $y_1< y_2<\ldots y_d$. If $P\in \mathcal{D}(\mathcal{Y}, \mu)$ we write
 $p_j=P(Y=y_j)$, and $P=(p_1,\ldots, p_d)$.
Given a pmf in $\dym$ a  class of plausible alternatives is defined to be a subclass $\mathcal{P}$ of $ \dym$, we therefore do not measure uncertainty related to the mean value, that is given.
Define $\chi(\mathcal{Y})$ to be the set of random variables with support included in $\mathcal{Y}$.
The pmfs $P\in \mathcal{D}(\mathcal{Y}, \mu)$ are the  possible distributions of  $Y\in\chi(\mathcal{Y})$.
Given a random variable $Y\in \chi(\mathcal{Y})$, let  $\rho:\chi(\mathcal{Y})\rightarrow \RR$ be a law invariant functional, i.e. if $X$  and $Y$ have the same law -$X\sim Y$-  then $\rho(X)=\rho(Y)$ (\cite{bellini2021law}). We consider two different cases:  in the pricing application, $\rho(Y)$ is a payoff function  and,  in the portfolio application, it is a measure of risk.  We study how the choice of a wrong distribution for $Y$ in a plausible class of distributions $\mathcal{P}\subseteq \dym$ impact on $\rho(Y)$ and we build on the geometrical characterization of the class $\mathcal{D}(\mathcal{Y}, \mu)$.

We recall that a polytope (or more specifically a $d$-polytope)
is the convex hull of a finite set of points in $\RR^d$ called the extremal points of the polytope. We say that a set of $k$ points is affinely independent
if no one point can be expressed as a linear convex combination of the others. For example, three points are affinely independent if they are not on the same line, four
points are affinely independent if they are not on the same plane, and so on. The convex hull of $k+1$  affinely
independent points is called a simplex or $k$-simplex. For example, the line segment joining two points is a
1-simplex, the triangle defined by three points is a 2-simplex, and the tetrahedron defined by four points is a
3-simplex. A complete reference on computational geometry is \cite{de1997computational}.

In \cite{fontana2021model} the authors proved that the class $\mathcal{D}(\{0,\ldots, d\}, \mu)$ is a convex polytope.
The following Theorem \ref{discrgen} is a generalization of this result   and provides the  generators of $\mathcal{D}(\mathcal{Y}, \mu)$ in closed form.
\begin{theorem}\label{discrgen}
The class  $\mathcal{D}(\mathcal{Y}, \mu)$ is a convex polytope with extremal points:

\begin{equation}  \label{binuleq}
Q_{j_1,j_2}=\left\{
\begin{array}{cc}
q_{j_1}=\frac{y_{j_2}-\mu}{y_{j_2}-y_{j_1}} & y=y_{j_1} \\
q_{j_2}=1-q_{j_1} & y=y_{j_2} \\
0 & \text{otherwise}%
\end{array}
\right.,
\end{equation}
where ${j_2}>{j_1}\in \mathcal{Y}$, and
\begin{equation}\label{noArb0}
y_{j_1}< \mu< y_{j_2}
 \end{equation} i.e., any  measure $Q\in \mathcal{D}(\mathcal{Y}, \mu)$ has the representation
\begin{equation}\label{rep0}
{Q}=\sum_{i=1}^{n_{\mathcal{D}}}\lambda _{i}{Q}_{i},
\end{equation}
where ${Q}_{i}$ are the generators of $\mathcal{D}(\mathcal{Y}, \mu)$ and $n_{\mathcal{D}}$ is their number.

If $\mu=y_{j_1}$ or $\mu=y_{j_2}$ there is also

\begin{equation}  \label{binuleq}
Q_{\mu}=\left\{
\begin{array}{cc}
1&y=\mu\\
0 & \text{otherwise}%
\end{array}
\right..
\end{equation}

\end{theorem}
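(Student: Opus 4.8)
The plan is to realise $\dym$ as the intersection of the standard probability simplex with an affine hyperplane and then read off its vertices from elementary polytope theory (see \cite{de1997computational}). Writing $P=(p_1,\dots,p_d)$, the class $\dym$ is exactly the set of $P\in\RR^d$ satisfying the $d$ inequality constraints $p_j\ge 0$ together with the two equalities $\sum_{j=1}^d p_j=1$ and $\sum_{j=1}^d p_j y_j=\mu$. This set is bounded, since it sits inside the simplex, and it is cut out by finitely many linear (in)equalities; hence it is a convex polytope, and by the Minkowski--Weyl / Krein--Milman representation every $Q$ in it is a convex combination of its extremal points, which is precisely the representation \eqref{rep0}. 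So the real content of the theorem is the explicit identification of these extremal points.

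To do this I would invoke the standard active-constraint characterisation: a point of a polytope in $\RR^d$ is a vertex if and only if the constraints active there have rank $d$, i.e. they determine the point uniquely. At any $P\in\dym$ the two equality constraints are always active and, because $y_1<\dots<y_d$, they are linearly independent, contributing rank $2$. To reach rank $d$ one therefore needs at least $d-2$ of the nonnegativity constraints $p_j\ge 0$ to be active as well; equivalently, a vertex must have at most two nonzero coordinates, i.e. support of size at most two.

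Next I would split into the two surviving cases. If the support is a pair $\{y_{j_1},y_{j_2}\}$ with $j_1<j_2$, the two equalities reduce to the $2\times 2$ system $p_{j_1}+p_{j_2}=1$, $p_{j_1}y_{j_1}+p_{j_2}y_{j_2}=\mu$, whose unique solution is $p_{j_1}=\frac{y_{j_2}-\mu}{y_{j_2}-y_{j_1}}$ and $p_{j_2}=\frac{\mu-y_{j_1}}{y_{j_2}-y_{j_1}}$; nonnegativity of both holds exactly when $y_{j_1}\le\mu\le y_{j_2}$, and strict positivity (genuine two-point support) exactly when the interlacing condition \eqref{noArb0} holds, producing the generators $Q_{j_1,j_2}$. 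If instead the support is a single point $y_j$, the mean constraint forces $y_j=\mu$, which is possible only when $\mu\in\mathcal{Y}$; this yields the extra degenerate vertex $Q_\mu$, consistent with the weights collapsing to $1$ and $0$ at the boundaries $\mu=y_{j_1}$ or $\mu=y_{j_2}$. Conversely, each listed $Q_{j_1,j_2}$ is genuinely extremal: if $Q_{j_1,j_2}=\tfrac12(P+P')$ with $P,P'\in\dym$, then nonnegativity forces $P$ and $P'$ to vanish off $\{y_{j_1},y_{j_2}\}$, and uniqueness of the $2\times 2$ solution forces $P=P'=Q_{j_1,j_2}$.

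I expect the main obstacle to be careful bookkeeping rather than a deep idea. The delicate point is the rank argument: one must check that any set of active constraints of rank $d$ cannot retain three or more positive coordinates, and that the normalisation row $(1,\dots,1)$ and the mean row $(y_1,\dots,y_d)$ stay independent of the activated coordinate rows $e_j$ (which holds because $y_{j_1}\neq y_{j_2}$ on any retained pair). The remaining subtlety is to handle the coincidences $\mu=y_j$ cleanly, so that $Q_\mu$ is counted once and not conflated with the degenerate limits of the two-point generators.
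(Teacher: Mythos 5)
Your proposal is correct and follows essentially the same route as the paper: both reduce the problem to the fact that an extremal point of the polytope cut out by the two equality constraints (normalisation and mean) can have at most two nonzero coordinates --- you via the rank-$d$ active-constraint count, the paper via $\rank B=1$ and the basic-feasible-solution argument --- and then both solve the same $2\times 2$ system, obtaining positivity exactly under the interlacing condition $y_{j_1}<\mu<y_{j_2}$ and the degenerate point mass $Q_\mu$ when $\mu\in\mathcal{Y}$. Your write-up is if anything slightly more complete, since you also verify the converse (that each $Q_{j_1,j_2}$ is genuinely extremal), a step the paper leaves implicit.
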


\begin{proof}

Let $P=(p_1,\ldots, p_d)$ be a discrete probability measure. We have $P\in \mathcal{D}(\mathcal{Y}, \mu)$ iff
\begin{equation}  \label{simp0}
\sum_{l=1}^d(y_j-\mu)p_j=0,
\end{equation}

with the conditions $p_j\geq0, \, j=0,\ldots,d$ and $\sum_{j=1}^dp_j=1$.
From the standard theory of linear equations we know that all the positive
solutions of \eqref{simp0} that sums up to one are elements of the convex polytope
\begin{equation*}  \label{cone}
\mathcal{P}=\{\boldsymbol{z}\in \mathbb{R}^{d}: \sum_{j=1}^d(y_j-\mu)z_j=0,\, I%
\boldsymbol{z}\geq 0, \sum_{j=1}^dz_j=1\},
\end{equation*}
where  $I$ is the $d\times d$ identity matrix.
The matrix  $B=[y_1-\mu,\ldots, y_d-\mu]$ is the row vector of the coefficients. Since $\rank B=1$ then an extremal point $Q$ of the polytope  has at most two non-zero components, say $q_{j_1}, q_{j_2}$. Therefore the extremal points can be found as the  positive solutions of  the reduced  linear system
\begin{equation}  \label{simp00}
\left\{ \begin{array}{c}
(y_{j_1}-\mu)q_{j_1}+(y_{j_2}-\mu)q_{j_2}=0,\\
q_{j_1}+q_{j_2}=1
\end{array}
\right.,
\end{equation}
where we make the non restrictive assumption $j_2>j_1$, therefore $y_{j_2}>y_{j_1}$.
The above system has a unique solution:

\begin{equation}  \label{binul}
Q_{j_1,j_2}=\left\{
\begin{array}{cc}
q_{j_1}=\frac{y_{j_2}-\mu}{y_{j_2}-y_{j_1}} \\
q_{j_2}=1-q_{j_1}\\
\end{array}
\right.,
\end{equation}
The solution in \eqref{binul}
is positive iff   $(y_{j_1}-\mu)(y_{j_2}-\mu)<0$ and this is equivalent to the condition
\eqref{noArb0}, i.e., $y_{j_1}< \mu< y_{j_2}$. Obviously if $\mu=y_{j_1}$ or $\mu=y_{j_2}$, $Q_{\mu}\in \mathcal{D}(\mathcal{Y}, \mu)$ and it has support on less than two points.
Thus the assert.
\end{proof}

We call the extremal points $Q^j:=Q_{j_1,j_2}$ of $\mathcal{D}(\mathcal{Y}, \mu)$ extremal pmfs or probability measures.

Let $Y\in \chi(\mathcal{Y})$, we measure the effect on $\rho(Y)$ of the uncertainty about  $P\in \mathcal{P}\subseteq \dym$ as in \cite{cont2006model}, by:

\begin{equation}\label{bounds1}
\mu_{\mathcal{P}}(Y)=\rho_U^{\mathcal{P}}(Y)-\rho_L^{\mathcal{P}}(Y),
\end{equation}
where $\rho_U^{\mathcal{P}}(Y)=sup_{p\in \mathcal{P}}\rho(Y)$ and $\rho_L^{\mathcal{P}}(Y)=inf_{p\in \mathcal{P}}\rho(Y)$.
We denote $\rho_U(Y)=\rho_U^{\dym}(Y)$ and  $\rho_L(Y)=\rho_L^{\dym}(Y)$.

If $\rho(Y)=E[\phi(Y)]$ for any measurable function $\phi$ the following proposition proves  that $\rho_U(Y)$ and $\rho_L(Y)$ are reached on the extremal generators. The proof is similar to the special case $\mathcal{Y}=\{0,\ldots, d\}$ in  \cite{fontana2021exchangeable}.

\begin{proposition}\label{bound0}
If $\rho(Y)=E[\phi(Y)]$ for a measurable function $\phi$  the bounds in \eqref{bounds1}   are attained at two   extremal points of $\dym$.
\end{proposition}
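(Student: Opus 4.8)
The plan is to exploit the fact that, when $\rho(Y)=E[\phi(Y)]$, the functional is \emph{linear} in the probability vector $P=(p_1,\dots,p_d)$, and then to combine this linearity with the polytope structure of $\dym$ established in Theorem \ref{discrgen}. Writing $p_j=P(Y=y_j)$, the starting observation is that
\begin{equation*}
\rho(Y)=E[\phi(Y)]=\sum_{j=1}^d \phi(y_j)\,p_j,
\end{equation*}
so that $\rho$ depends on $Y$ only through its distribution $P$ (consistently with law invariance) and does so as a linear function of the coordinates of $P$. I will denote this value by $\rho(P)$, with a slight abuse of notation.

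Next I would invoke the representation \eqref{rep0}: every $Q\in\dym$ can be written as $Q=\sum_{i=1}^{n_{\mathcal{D}}}\lambda_i Q_i$ with $\lambda_i\ge 0$ and $\sum_{i=1}^{n_{\mathcal{D}}}\lambda_i=1$, where the $Q_i$ are the finitely many extremal pmfs of Theorem \ref{discrgen}. Linearity then gives
\begin{equation*}
\rho(Q)=\sum_{i=1}^{n_{\mathcal{D}}}\lambda_i\,\rho(Q_i).
\end{equation*}
Since the right-hand side is a convex combination of the finite set of real numbers $\{\rho(Q_1),\dots,\rho(Q_{n_{\mathcal{D}}})\}$, it is bounded above by $\max_i\rho(Q_i)$ and below by $\min_i\rho(Q_i)$, and both the maximum and the minimum exist because the index set is finite.

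To conclude, I would note that these bounds are \emph{attained} inside $\dym$: choosing the degenerate weights $\lambda_{i^\ast}=1$ for an index $i^\ast$ realizing the maximum (respectively the minimum) shows that $\rho_U(Y)=\max_i\rho(Q_i)$ is achieved at the extremal pmf $Q_{i^\ast}$, and symmetrically $\rho_L(Y)=\min_i\rho(Q_i)$ is achieved at another extremal pmf; hence the two bounds in \eqref{bounds1} are each attained at an extremal point of $\dym$. I do not expect a serious obstacle here. The only point requiring care is that the expectation form of $\rho$ is used in an essential way: linearity in $P$ is precisely what lets a convex combination of values be dominated by the extremal values, so for a general law-invariant $\rho$ the attainment at vertices need not hold. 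The finiteness of $n_{\mathcal{D}}$, guaranteed by Theorem \ref{discrgen}, is what upgrades the supremum and the infimum to a maximum and a minimum.
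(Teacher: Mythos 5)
Your proposal is correct and takes essentially the same route as the paper: both expand $E[\phi(Y)]=\sum_j \phi(y_j)p_j$ as a linear functional of the pmf, apply the convex-combination representation \eqref{rep0} from Theorem \ref{discrgen} to obtain $\rho(Q)=\sum_i \lambda_i\,\rho(Q_i)$, and conclude that the extrema over the finite set of extremal values are attained at extremal points. Your explicit remark on attainment via degenerate weights $\lambda_{i^\ast}=1$ simply makes precise what the paper leaves implicit.
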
\label{priceBound}
\begin{proof}

Let $Y\sim Q\in \mathcal{D}(\mathcal{Y}, \mu)$.

\begin{equation}
\begin{split}
E[\phi(Y)]&=\sum_{h=1}^{d}\phi(y_h)q_h=\sum_{h=1}^{d}\phi(y_h)\sum_{l=1}^{n_{\mathcal{D}}}\lambda_jq^j_h\\
&=\sum_{j=1}^{n_{\mathcal{D}}}\lambda_j\sum_{h=1}^{d}\phi(y_h)q^j_h=\sum_{j=1}^{n_{\mathcal{D}}}\lambda_jE[\phi(Y^j)],
\end{split}
\end{equation}
where $Q^j=(q^j_1,\ldots, q^j_d), j=1,\ldots {n_{\mathcal{D}}}$ are the extremal measures and $Y^j\sim Q^j$.
Therefore $E[\phi(Y)]$ is a point in a convex polytope generated by the expectations $E[\phi(Y^j)]$, thus the maximum and the minimum values of $E[\phi(X)]$   are reached on the extremal points.

\end{proof}

If $\phi$ is also convex we can explicitly find the corresponding extremal pmfs. This is important for our main application, where we measure the effect of model risk on convex derivatives  (e.g. options) prices.
The convex order is necessary to prove one of our main results.

\begin{definition}
Given two random variables $X$ and $Y$ with finite means, $X$ is said to be smaller
than $Y$ in the convex order (denoted $X\le_{cx}Y$) if
$$
E[\phi(X)]\leq E[\phi(Y)]
$$
for all real-valued convex functions $\phi$ for which the expectations exist.
\end{definition}
 The convex
order is a variability order, in fact it  is easy to verify that
$X\le_{cx}Y$ implies $E[X]=E[Y]$, and
${V}[X]\leq{{V}[Y]}$.
It can also be proved, see e.g. \cite{shaked2007stochastic}, that
\begin{center}
$X\leq_{cx}Y$ iff $E[X]=E[Y]$ and $E[(X-l)^+]\leq E[(Y-l)^+]$ for all $l\in \RR_+$,
\end{center}
where $x^+=max\{x,0\}$. See \cite{denuit1998optimal} for a complete overview on the convex order.

\begin{proposition}\label{cx}
Let $Y\in \mathcal{D}(\mathcal{Y}, \mu)$, then
\[
Y_L\leq_{cx}Y\leq_{cx} Y_U,\] where
$Y_L\sim Q_L$, $Y_U\sim Q_U$, $Q_L=Q_{y_M,y_m}$ with $y_M$ being the largest $y\in \mathcal{Y}$ smaller than $\mu$ and $y_m$ the smallest $y\in \mathcal{Y}$ bigger than $\mu$, and $Q_U=Q_{1,d}$.

\end{proposition}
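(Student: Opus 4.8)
The plan is to use the stop-loss characterization of the convex order recalled just above the statement: since $Y$, $Y_L$ and $Y_U$ all lie in $\dym$ they share the common mean $\mu$, so it suffices to prove that the stop-loss transforms are ordered, i.e.\ writing $\pi_Y(l)=E[(Y-l)^+]$, that $\pi_{Y_L}(l)\le \pi_Y(l)\le \pi_{Y_U}(l)$ for every $l\in\RR$. First I would record three elementary facts valid for \emph{any} $Y\sim Q\in\dym$: the map $l\mapsto\pi_Y(l)$ is convex; it satisfies $\pi_Y(l)\ge \max\{\mu-l,0\}$ (by Jensen, since $x\mapsto(x-l)^+$ is convex); and, because the support sits inside $[y_1,y_d]$, one has $\pi_Y(l)=\mu-l$ for $l\le y_1$ and $\pi_Y(l)=0$ for $l\ge y_d$, hence in particular $\pi_Y(y_1)=\mu-y_1$ and $\pi_Y(y_d)=0$. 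These identities hold irrespective of $Q$, so at $l=y_1$ and $l=y_d$ all three transforms coincide.

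For the upper bound I would compute $\pi_{Y_U}$ explicitly from $Q_U=Q_{1,d}$: it equals $\mu-l$ on $(-\infty,y_1]$, equals $0$ on $[y_d,\infty)$, and on $[y_1,y_d]$ is the affine chord joining $(y_1,\mu-y_1)$ and $(y_d,0)$. Outside $[y_1,y_d]$ the two transforms agree by the boundary facts. Inside $[y_1,y_d]$, $\pi_Y$ is convex and agrees with the affine $\pi_{Y_U}$ at both endpoints; since a convex function lies below the chord through its endpoint values, $\pi_Y\le\pi_{Y_U}$ on the whole interval, giving $Y\le_{cx}Y_U$. Equivalently, one may argue directly that on $[y_1,y_d]$ every convex $\phi$ lies below its chord through $(y_1,\phi(y_1))$ and $(y_d,\phi(y_d))$, so $E[\phi(Y)]$ is at most the value of that affine chord at $\mu$, which is exactly $E[\phi(Y_U)]$.

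For the lower bound the key observation, and the real crux, is that $y_M$ and $y_m$ are \emph{consecutive} points of $\mathcal{Y}$ straddling $\mu$, so no element of $\mathcal{Y}$ lies in the open interval $(y_M,y_m)$. Consequently $\pi_Y$ is \emph{affine} there: for $l\in[y_M,y_m]$ each summand $(y_j-l)^+$ equals $0$ (when $y_j\le y_M$) or $y_j-l$ (when $y_j\ge y_m$), so $\pi_Y(l)=\sum_{y_j\ge y_m}(y_j-l)q_j$ is affine in $l$. The transform of $Q_L=Q_{y_M,y_m}$ is likewise the chord from $(y_M,\mu-y_M)$ to $(y_m,0)$ on $[y_M,y_m]$, equal to $\mu-l$ for $l\le y_M$ and to $0$ for $l\ge y_m$. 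Since two affine functions on $[y_M,y_m]$ are compared by their endpoint values, and $\pi_Y(y_M)\ge\mu-y_M=\pi_{Y_L}(y_M)$ (using $\pi_Y(l)\ge\mu-l$) together with $\pi_Y(y_m)\ge 0=\pi_{Y_L}(y_m)$, I obtain $\pi_{Y_L}\le\pi_Y$ on $[y_M,y_m]$; outside it the inequality is immediate from $\pi_Y(l)\ge\max\{\mu-l,0\}$. This yields $Y_L\le_{cx}Y$.

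The step I expect to be delicate is precisely this linearity of $\pi_Y$ on $[y_M,y_m]$: a purely convexity-based comparison would fail, since convexity places $\pi_Y$ \emph{below} its chords, not above, so matching endpoint values alone would not force $\pi_Y\ge\pi_{Y_L}$. It is the absence of lattice points in $(y_M,y_m)$ that upgrades ``convex'' to ``affine'' on that gap and makes the endpoint comparison decisive. Finally I would flag the boundary case $\mu\in\mathcal{Y}$ separately, since there $Q_{y_M,y_m}$ is not minimal: the degenerate generator $Q_\mu$ of Theorem \ref{discrgen} has stop-loss transform $(\mu-l)^+$, which by Jensen is the pointwise minimum of $\pi_Y$ over $\dym$, so $Q_\mu$ is the convex-order minimum and plays the role of $Q_L$ in that case.
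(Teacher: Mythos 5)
Your proof is correct, and it takes a genuinely different route from the paper. The paper first invokes Proposition \ref{bound0} to reduce the claim to the extremal two-point measures $Q^j$ of the polytope, and then verifies the stop-loss inequalities $E[(Y_L-K)^+]\le E[(Y^j-K)^+]\le E[(Y_U-K)^+]$ pairwise, by an explicit case analysis on the position of $K$ relative to the support points, with direct computation of the two-point stop-loss transforms and a monotonicity argument in $y_{j_1}$. You instead compare an \emph{arbitrary} $Y\sim Q\in\dym$ to $Y_L$ and $Y_U$ directly through the geometry of the stop-loss transform $\pi_Y$: for the upper bound, convexity of $\pi_Y$ together with the endpoint identities $\pi_Y(y_1)=\mu-y_1$, $\pi_Y(y_d)=0$ puts $\pi_Y$ below the affine chord that \emph{is} $\pi_{Y_U}$; for the lower bound you correctly isolate the crux, namely that the absence of points of $\mathcal{Y}$ in $(y_M,y_m)$ makes $\pi_Y$ \emph{affine} on that gap, so the Jensen bound $\pi_Y(l)\ge(\mu-l)^+$ at the two endpoints propagates to the whole interval --- and you rightly note that a chord argument alone would point the wrong way here. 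Your approach buys three things: it dispenses with the reduction to extremal points (Proposition \ref{bound0} is not needed), it replaces the paper's casework (which contains several notational slips, e.g.\ ``$K>j_m$'', ``$y_{l_1}<\mu<y_{l_2}$'') by two clean geometric comparisons, and it explicitly treats the boundary case $\mu\in\mathcal{Y}$, where $Q_{y_M,y_m}$ fails to be a convex-order lower bound and the degenerate generator $Q_\mu$ of Theorem \ref{discrgen} takes its place --- a case the paper's statement and proof leave untreated. What the paper's route buys in exchange is reuse of the polytope machinery that is central to the rest of the article, with all comparisons confined to two-point laws. One small point: the paper states the stop-loss characterization with $l\in\RR_+$, which suffices only for nonnegative supports; your use of all $l\in\RR$ is the correct general form and is what your argument actually needs if $\mathcal{Y}$ contains negative values.
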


\begin{proof}

Let $Y\sim Q\in \mathcal{D}(\mathcal{Y}, \mu)$. From Proposition \ref{bound0} for any measurable $\phi$
 the maximum and the minimum values of $E[\phi(X)]$   are reached on the extremal points.
It is therefore sufficient to prove that for any $\phi$ convex and
\[Y_L\leq_{cx}Y^j\leq_{cx} Y_U,\,\,\, \forall j\in \{1,\ldots, n_{\mathcal{D}}\}\]
where $Y^j\sim Q^j$ and $Q^j,  j\in \{1,\ldots, n_{\mathcal{D}}\}$ are the extremal measures.

Since $E[Y_L]=E[Y^j]=E[Y_U]$ by construction, it is sufficient to prove that
\[E[(Y_L-K)^+]\leq E[(Y^j-K)^+] \leq E[(Y_U-K)^+], \,\,\, \forall K\geq0.
\]
We prove that $E[(Y_L-K)^+]\leq E[(Y^j-K)^+]$.
Let $Y^j\in \mathcal{D}(\mathcal{Y}, \mu)$, $Y^j\sim Q^j$ with support on  $\{y_{j_1}, y_{j_2}\}$ and let $y_{l_1}<\mu<y_{l_2}$.
We consider three cases:

 Case 1: $K> j_m$.  We have $E[(Y_L-K)^+]=0$ and   $E[(Y^j-K)^+]>0$ for any $j_2>j_m$ and the assert is proved.

Case 2: $y_1<K<y_m$.  Since $y_{j_1}\leq y_M\leq y_m\leq y_{j_2}$, we have
\[
\begin{split}
 E[(Y-K)^+]=&(y_{j_1}-K)^+q_{j_1}+(y_{j_2}-K)^+q_{j_2}=(y_{j_2}-K)^+q_{j_2}=\\
&(y_{j_2}-K)\frac{\mu-y_{j_1}}{y_{j_2}-y_{j_1}}\geq(y_m-K)^+\frac{\mu-y_{j_1}}{y_{m}-y_{j_1}},
\end{split}
\]
where the last inequality follows from direct computations. Then
we observe that $\frac d{dy_{j_1}}(\frac{\mu-y_{j_1}}{y_{m}-y_{j_1}})<0$, thus $\frac{\mu-y_{j_1}}{y_{m}-y_{j_1}}$ is decreasing in $y_{j_1}$ and
\[
\begin{split}
 E[(Y-K)^+]\geq(y_m-K)^+\frac{\mu-y_{j_1}}{y_{m}-y_{j_1}}\geq (y_m-K)^+\frac{\mu-y_{M}}{y_{m}-y_{M}}=E[(Y_L-K)^+],
\end{split}
\]
 the assert is proved.

Case 3: $K<y_{j_M}$. If $K<y_{j_1}$ then
\[ E[(Y^j-K)^+]=\mu-K,\] for any $y_{j_1},y_{j_2}$ and the assert is proved.
If $ y_{j_1}\leq K\leq y_M$
\begin{equation}\label{caso3}
 E[(Y^j-K)^+]=(y_{j_2}-K)\frac{\mu-y_{j_1}}{y_{j_2}-y_{j_1}}\geq \mu-K,\end{equation} for any $y_{j_1},y_{j_2}$ and the assert is proved.

We now prove  that $E[(Y^j-K)^+] \leq E[(Y_U-K)^+]$.
We have
\[
\begin{split}
 E[(Y_U-K)^+]=
&(y_d-K)\frac{\mu-y_1}{y_{d}-y_{1}}.
\end{split}
\]

Case 1: $K>y_{j_2}$.  We have $E[(Y^j-K)^+]=0$ and the assert is proved.

Case 2: $y_{j_1}<K<y_{j_2}$. As in Case 2 above, we have
\[
\begin{split}
 E[(Y_U-K)^+]=
&(y_{d}-K)^+\frac{\mu-y_1}{y_{d}-y_{1}}\geq(y_{j_2}-K)^+\frac{\mu-y_{j_1}}{y_{j_2}-y_{j_1}}, \,\,\, \forall j_1,j_2.
\end{split}
\]

Case 3: $K<y_{j_1}$. If $k<y_{j_1}$ then

Straightforward computations give:
\[ E[(Y^j-K)^+]=\mu-K,\] for any $y_{j_1},y_{j_2}$,  the assert follows from Equation \ref{caso3} with $j_1=1$ and $j_2=d$.

\end{proof}
We thus have proved that if $\rho(X)=E[\phi(X)]$, where $\phi$ is a convex function and $\mathcal{P}=\dym$, than
\begin{equation}\label{globrisk}
\mu_{\mathcal{P}}(Y)=\rho(Y_U)-\rho(Y_L).
\end{equation}
The risk associated with the whole class is analytical, because $Q_L$ and $Q_U$ are analytical, nevertheless using the methodology developed in \cite{fontana2021exchangeable} we can find empirical bounds also on subclasses $\mathcal{P}\subset \dym$, as we do in our application to derivative pricing.

\section{\protect\bigskip Distribution risk in discrete  pricing models\label{Distr:price}}
We consider the effect of model risk on derivative pricing in incomplete discrete markets.
As in \cite{cont2006model} we consider uncertainty in the choice of the pricing model, specifically we study the risk arising from the choice of an equivalent martingale measure.
We consider a  market $\mathcal{M}$ with  a risky asset and a risk-free bond with $L$ states of the world. We  denote the price of
the stock at discrete times  $t_n, n = 0, 1, \ldots, N$ with $t_0 = 0$ by $\SS=\{S_n, n = 0, 1, \ldots,N\}$  and the risk-free bond  price by  $\BB=\{B_n=(1+R)^n, n = 0, 1, \ldots,N\}$, where $R$ is the single period risk-free rate.  The reference historical probability $P$ is represented as a vector $P=(p_1,\ldots, p_L)$, $\sum_{i=1}^Lp_i=1$, $p_i>0, i=1,2,\dots, L$.

After introducing the stock market price dynamics, using the results in Section \ref{Sec:distrrisk}, we find the generators of all the risk-neutral measures defined on the possible $L$ states of the world. We have an arbitrage free price for each measure equivalent to $P$ in the polytope. Usually, the risk-neutral measure selected to price is the minimal equivalent martingale measure (MEMM), that is unique given $P$. We measure the risk associated with this choice of the risk-neutral measure at three levels:
\begin{enumerate}
\item we find the analytical bounds for prices in the whole polytope, that define the maximal admissible  prices interval;
\item we find by simulations bounds for prices in the class of plausible risk-neutral measures, i.e. the  risk-neutral measures equivalent to $P$;
\item we measure the effect on prices of small perturbations of the MEMM.
\end{enumerate}

 We also provide an illustrative example on real data considering  an European style call option. For the simple case of European options we observe that increasing the number of steps the price interval is close to the no arbitrage interval  $I=(1/(1+R)^n\phi(S_0(1+R)^n), S_0)$ with $\phi$ convex, that is entirely spanned in most of the continuous incomplete models as proved in  \cite{eberlein1997range}.
\subsection{The model}

Assume that the stock price process $\SS=\{S_n,\, n\in \NN\}$ evolves randomly on an $L$ state lattice model (i.e.,
given the price of the stock at time $t_n, n = 1,\ldots, N-1$.
there are $L$ possible future
prices that it can take at time $t_{n+1}$). Suppose that $u$ and $d$ satisfy $u >d > 0$ then
a multinomial recombining lattice can be constructed by taking the $L$ possible future states for $S_{n+1}$ from $S_n$ as
\begin{equation}
S_{n+1} = u^{L-l}d^{l-1}S_n, \,l = 1,\ldots,L
\end{equation}
with probabilities $p_l>0 , l = 1,\ldots, L$, satisfying $\sum_{j=0}^Lp_j=1$. In this case, the
stock may achieve $n(L -1)+1$ possible prices at time $t = t_n, n = 0,\ldots, N$ given
by
\begin{equation}
S^{(k)}_n
= u^{n(L-1)+1-k}d^{k-1}S_0, \, k = 1, \ldots, n(L-1) + 1.
\end{equation}
Let now  $a_l:=u^{L-l}d^{l-1}$ so that we have
\begin{equation}
S_{n+1} = a_lS_n, \,l = 1,\ldots,L.
\end{equation}

\subsubsection{risk-neutral probabilities generators}

A strictly positive probability measure $Q=(q_1,\ldots, q_L)$ is said to be a risk-neutral probability measure for $\SS$ iff the discounted price process  is a $Q$-martingale, i.e. for any $n\in \NN$
\begin{equation}\label{MartCond0}
E^Q[S_{n+1}|S_n=S_n^{k}]=S_n^{(k)}(1+R).
\end{equation}
The no arbitrage condition, i.e. the existence of a risk-neutral probability measure, is $a_m\leq 1+R\leq a_M$, where   $a_m=\min_{i}a_i=d^{L-1}$ and $a_M=\max_ia_i=u^{L-1}$. We assume that  the market $\mathcal{M}$ is arbitrage free  and therefore an equivalent martingale measure exists. If there are more the two states of the world $L>2$ the market is not complete and the martingale measure is not unique. Theorem \ref{Qpolit} provides a geometrical  representation of  the class of all the risk-neutral probabilities as a convex polytope and analytically provides the set of extremal generators.

\begin{theorem}\label{Qpolit}
\label{binu} 

The risk-neutral probabilities $Q=(q_1,\ldots, q_L)$ are points in the convex polytope $$\mathcal{P}_Q=\{\boldsymbol{z}\in \mathbb{R}^{d}: \sum_{l=1}^d(a_l-(1+R))z_j=0,\, I%
\boldsymbol{z}\geq 0, \sum_{j=1}^dz_j=1\},$$
 whose generators are
%
\begin{equation}  \label{binuleq}
Q_{l_1,l_2}=\left\{
\begin{array}{cc}
q_{l_1}=\frac{(1+R)-a_{l_2}}{a_{l_1}-a_{l_2}} & y=a_{l_1}S_0 \\
q_{l_2}=1-q_{l_1} & y=a_{l_2}S_0 \\
0 & \text{otherwise}%
\end{array}
\right.,
\end{equation}
where $l_2>l_1>0$, and - no arbitrage condition -
\begin{equation}\label{noArb}
a_{l_2}\leq 1+R\leq a_{l_1}
 \end{equation} i.e., any equivalent martingale measure for $\SS$ has the representation
\begin{equation}\label{rep}
{Q}=\sum_{l\in{\l}}^{n_{L}}\lambda _{l}{Q}^{l},
\end{equation}
where $\mathcal{L}$ is the set of pairs $(l_1, l_2)$ that satisfy \eqref{noArb}, ${Q}^{l}$ are the generators of $\mathcal{P}_Q$ in \eqref{binuleq} and $n_L=\#\l$.
\end{theorem}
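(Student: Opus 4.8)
The plan is to recognize Theorem~\ref{Qpolit} as a direct instance of Theorem~\ref{discrgen}, once the martingale requirement is reduced to a single mean constraint. First I would unwind the defining condition \eqref{MartCond0}. Because the lattice is recombining and multiplicative, from any node $S_n^{(k)}$ the successor satisfies $S_{n+1}=a_lS_n^{(k)}$ with the same transition probabilities $q_l$, $l=1,\ldots,L$, at every node. Hence \eqref{MartCond0} reads $\sum_{l=1}^L a_l S_n^{(k)} q_l = S_n^{(k)}(1+R)$, and dividing by $S_n^{(k)}>0$ it collapses, independently of $n$ and $k$, to the single linear equation
\begin{equation*}
\sum_{l=1}^L \bigl(a_l-(1+R)\bigr)\,q_l = 0 .
\end{equation*}

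This exhibits $Q=(q_1,\ldots,q_L)$ as a probability measure on the support $\mathcal{A}=\{a_1,\ldots,a_L\}$ whose mean equals $1+R$; that is, $Q$ is exactly an element of $\mathcal{D}(\mathcal{A},1+R)$ in the notation of Section~\ref{Sec:distrrisk}, with $a_l$ playing the role of $y_l$ and $1+R$ the role of $\mu$. The polytope $\mathcal{P}_Q$ in the statement is then literally the set $\mathcal{P}$ appearing in the proof of Theorem~\ref{discrgen} (compare with \eqref{simp0}), so the entire geometric conclusion — convex polytope, extremal points supported on at most two states, representation \eqref{rep} — follows by invoking that theorem. The closed form \eqref{binuleq} and the pair condition \eqref{noArb} are obtained by solving the reduced two-unknown system exactly as in \eqref{simp00}, substituting $a_{l_1},a_{l_2}$ for $y_{j_1},y_{j_2}$.

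The one point requiring care is orientation. Since $u>d>0$, the returns $a_l=u^{L-l}d^{l-1}$ are strictly decreasing in $l$, so $a_1>a_2>\cdots>a_L$, the reverse of the increasing convention $y_1<\cdots<y_d$ used in Theorem~\ref{discrgen}. I would therefore either relabel the support in increasing order before applying the theorem, or carry the reversal through directly: nonnegativity of both $q_{l_1}$ and $q_{l_2}$, given that $a_{l_1}>a_{l_2}$ whenever $l_2>l_1$, forces $(a_{l_1}-(1+R))(a_{l_2}-(1+R))\le 0$, which is precisely the no-arbitrage pair condition $a_{l_2}\le 1+R\le a_{l_1}$ of \eqref{noArb}. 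Finally, the global hypothesis $a_m\le 1+R\le a_M$ guarantees that at least one admissible pair $(l_1,l_2)$ exists, so $\mathcal{L}\neq\emptyset$ and the polytope is nonempty.

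The hard part is conceptual rather than computational: establishing that the martingale requirement imposed at \emph{every} node of the tree is equivalent to the single linear constraint above. This rests on the time- and state-homogeneity of the transition probabilities $q_l$ in the recombining lattice, which is what makes the one-step return distribution identical at each node; once this is granted, the reduction is immediate. I would also flag the distinction between the extremal generators, which are supported on only two states and hence are \emph{not} equivalent to $P$, and the equivalent martingale measures of genuine interest, which are the strictly positive points in the relative interior of $\mathcal{P}_Q$ and inherit the representation \eqref{rep} as convex combinations of the generators.
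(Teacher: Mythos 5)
Your proposal is correct and takes essentially the same route as the paper's proof: the paper likewise reduces the nodewise martingale condition \eqref{MartCond0} to the single mean constraint $\sum_{l} a_l q_l = 1+R$ (there via the auxiliary one-step return variable $Z$ with $S_{n+1}=ZS_n$), identifies the risk-neutral measures with $\mathcal{D}(\mathcal{A},1+R)$ for $\mathcal{A}=\{a_1,\ldots,a_L\}$, and concludes by invoking Theorem \ref{discrgen}, noting exactly as you do that the $a_l$ are decreasing so that condition \eqref{noArb} corresponds to condition \eqref{noArb0}. Your explicit remarks on the reversed ordering, the nonemptiness of $\mathcal{L}$, and the non-equivalence of the extremal generators to $P$ are sound refinements of the same argument, not a different approach.
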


\begin{proof}
Let us introduce the random variable
$Z$ with support on $\{a_1,\ldots, a_L\}$ and distribution $P=(p_1,\ldots, p_L)$, i.e. $P(Z=a_l)=p_l$ and let it be  independent of $S_n$ , we can write
\begin{equation}
S_{n+1}=ZS_n.
\end{equation}
Equation \ref{MartCond0} is equivalent to
\[
E^Q[ZS_n|S_n=S_n^{k}]=S_n^{(k)}(1+R)
\]
and
\[
E^Q[Z]=(1+R)
\]
that is equivalent to
\begin{equation}\label{MartCond00}
\sum_{l=0}^La_{l}q_l=(1+R).
\end{equation}

We therefore have to find the solutions $Q=(q_1,\ldots, q_L)$ of \eqref{MartCond00}, i.e. we have  to find all the discrete distributions in $\mathcal{D}(\mathcal{A}, 1+R)$, where $\mathcal{A}=\{a_1,\ldots, a_L\}$.
Let $l_2>l_1>0$, we have
\begin{equation*}
\frac{a_{l_1}}{a_{l_2}}=(\frac{u}{d})^{l_2-l_1}>1,
\end{equation*}
and $a_{l_2}\leq a_{l_1}$. The assert follows from Proposition \ref{discrgen} observing that condition \eqref{noArb} is equivalent to condition \eqref{noArb0}.
\end{proof}
We call  $Q^l=Q_{l_1,l_2}, \,\,\, l=(l_1,l_2)\in \l$ a martingale measure extremal generator (MMEG).

\begin{remark}Theorem  \ref{Qpolit} implies that for $L=2$ the market is complete, i.e. the martingale measure is unique, and we find the Binomial model. For any $L$,
the generators of $\mathcal{P}_Q$ have support on two points. It follows from \eqref{binul} that they are the unique martingale measure of the complete Binomial model with $u'=a_{l_1}=u^{L-1_1}d^{l_1-1}$ and $d'=a_{l_2}=u^{L-l_2}d^{l_2-1}$. The conditions \eqref{noArb} are the no arbitrage conditions for each Binomial model. We call $B^l:=B_{l_1, l_2}, l\in \l$ the Binomial tree with the risk-neutral measure $Q^l, l\in \l$.
\end{remark}
The MMEGs  do not depend on the historical probability measure $P$ and in general they are not equivalent to the historical measure $P$. In fact, since $P$ and $Q$ are discrete measures, $Q$ is equivalent to $P$ ($Q\sim  P$)  iff they have the same support.
We introduce the family $\mathcal{Q}$ of plausible risk-neutral measures, as  the  measures in $\mathcal{P}_Q$ equivalent to $P$:
\begin{equation}
\mathcal{Q}=\{Q\in \mathcal{P}_Q: Q\sim P\}.
\end{equation}
\begin{remark}\label{Remark:abscont}
Notice that $\dym$ is a polytope in $\RR^{d-2}$, since  $P\in\dym$ if $P\in \RR^{d}$ and it satisfies two conditions -sum to one and mean. The pmfs without full support are included in lower dimension spaces. In fact if it exists $ j:$ $p_j=0$, then $P\in \RR^{d-1}$, thus the polytope is included in $\RR^{d-1-2}$ because of the two constraints - sum to one and mean that still remain. As a consequence uniform simulation from  $\dym$ or from $\mathcal{Q}$ gives the same results, in fact we have null probability to pick a pmf without full support. We also show that analytical bounds of prices and bounds found by simulation on $\mathcal{Q}$ are very close, for the same reason discussed above.
\end{remark}

The simplest example of incomplete lattice model is  the trinomial model.
We have three states of the world and therefore three states for the model: $\{a_1, a_2, a_3\}$. Without loss of generality assume $a_3>a_2>a_1$. The risk-neutral probabilities $Q=\{q_1, q_2, q_3\}$ are convex combinations of the risk-neutral probabilities associated with Binomial trees. The no arbitrage condition is $a_1<1+R<a_3$, thus we have two cases:
\begin{enumerate}
\item If $1+R<a_2$, from Theorem \ref{binu} the risk-neutral extremal points are two $Q_{12}$ and $Q_{13}$.
\item If $1+R>a_2$, from Theorem \ref{binu} the risk-neutral extremal points are two $Q_{23}$ and $Q_{13}$.
\end{enumerate}

\begin{example}\label{trinomial}
We propose a trivial but explicative example.
Let us consider a one step trinomial model, i.e. $L=3$, with $u=1.2,\,  d=0.8,\, R=2\%,\,S_0=100$.
The support of $S_1$ is $\{144, 96, 64 \}$ and the no arbitrage condition in satisfied. The generators of the risk-neutral probabilities are $Q_{{1},3}=\{0.475,
    0.525\}$ and $Q_{1,2}=\{0.125,
    0.875\}$. A trinomial recombining  tree with the two Binomial generators is shown in Figure \ref{TT}.

\begin{figure}[h!]
\caption{Trinomial tree. The green line indicates the Binomial tree $1,3$; the red line indicates the Binomial tree $1,2$. The orange line is the overlapping of the two Binomial trees.}\centering
\includegraphics[width=0.5\linewidth]{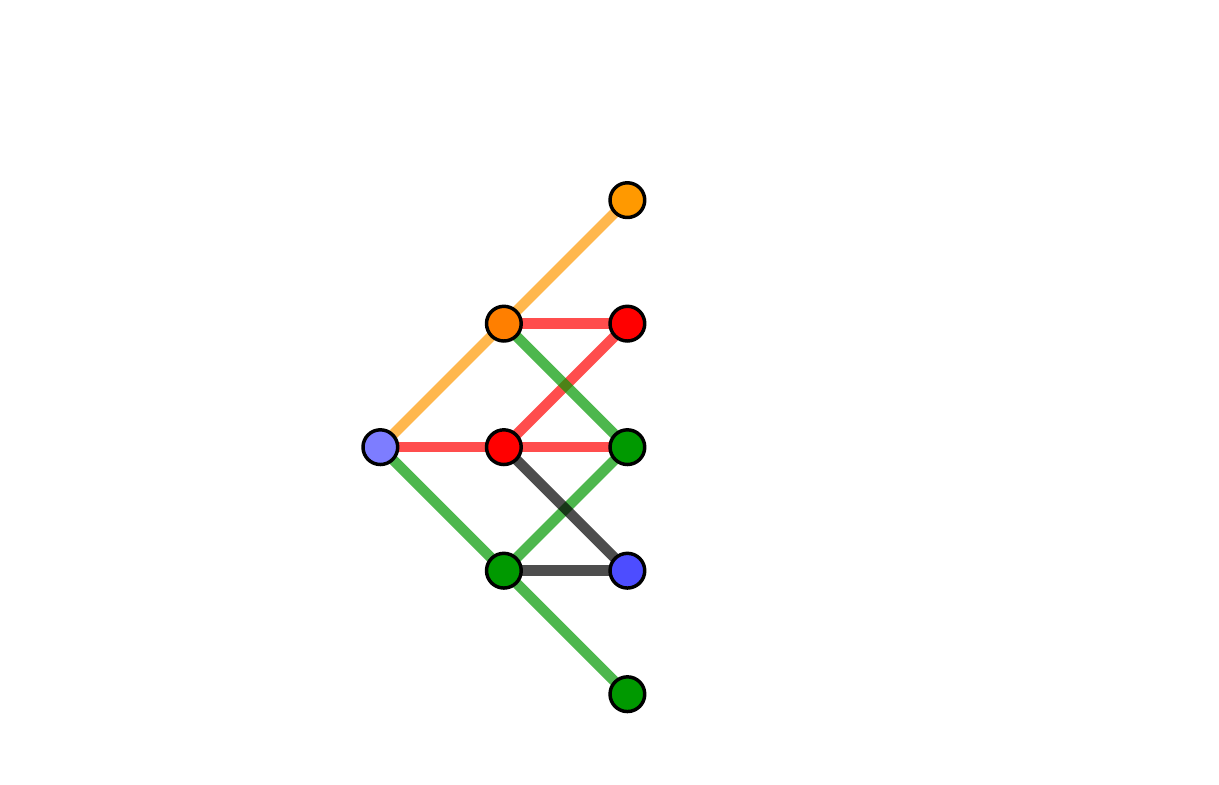}
\end{figure}
\label{TT}
\end{example}

\subsubsection{Minimal entropy martingale measure}

Derivatives evaluation in incomplete markets requires  a criterion to
choose a suitable risk-neutral pricing measure.
A standard criterion
 is based
on the  relative entropy with respect to the historical reference measure $P$.
\begin{definition}
Let Q and P be probability measures on a finite probability space $\mathcal{Y}$.  The relative
entropy of Q with respect to a probability measure P is a number defined as
\begin{equation}
I(Q, P)=\sum_{i=1}^Lq_i\ln(\frac{q_i}{p_i})
\end{equation}
with the assumption that  $0ln(0)=0$. If $Q$ and $P$ do not have the same support we define $I(Q, P)=+\infty$.
\end{definition}
\begin{remark}
The historical measure $P$ measure is assumed to have full support, thus the extremal pmfs are not equivalent to $P$ and have relative entropy $+\infty$.
\end{remark}

Among all the risk-neutral measures we select a measure that minimizes the relative entropy with respect to $P$. Formally, a probability measure $\tilde{Q}$ is called the Minimal entropy martingale measure (MEMM) if it satisfies
\begin{equation}
I(\tilde{Q},P)=\min_{Q\in \mathcal{M}}I(Q,P).
\end{equation}
To find a minimal entropy measure we have to solve the problem
\begin{equation}
\begin{cases}
&\min_{Q\in \mathcal{M}}I(Q,P)\\
&\sum_{l=1}^Lq_l=1\\
&\sum_{l=1}^La_{l}q_l=1+R.
\end{cases}
\end{equation}
In the multinomial model, under the no arbitrage condition, the MEMM $\tilde{Q}=(\tilde{q}_1,\ldots, \tilde{q}_L)$ does exist and is unique (\cite{frittelli1995minimal}). We can determine the MEMM using the
method of Lagrangian multipliers \cite{ssebugenyi2013minimal} and we find:
\begin{equation}\label{MMq}
\tilde{q}_i=\frac{p_ie^{-\tau a_i}}{\sum_{j=1}^Np_je^{-\tau a_j}}, \,\,\, i=1,\ldots, L,
\end{equation}
where $\tau$ is the solution of
\begin{equation}
\sum_{j=1}^La_jp_je^{-\tau a_j}=(1+R)\sum_{j=1}^Lp_je^{-\tau a_j}.
\end{equation}

The minimal relative entropy  is equivalent to the real world $P$ and therefore cannot be  reached on the extremal points of the polytope, thus the MEMM  is not a MMEG. Nevertheless, using the procedure in \cite{fontana2021exchangeable} we can simulate the distribution of the relative entropy across the class $\mathcal{Q}$.  We measure the risk of selecting the MEMM measure by considering the effect on prices consequent to a perturbation  of $\tilde{Q}$. We then consider the following class of distributions in $\mathcal{Q}$:

\begin{equation}\label{gamma}
\Gamma(\tilde{Q}, \epsilon)=\{Q\in \mathcal{Q}:\,\,\,  I(\tilde{Q},Q)\leq \epsilon\}\,\,\, \text{for} \,\,\,\epsilon>0
\end{equation}
and then we study the price distribution and their range across this class.

\setcounter{example}{0}
\begin{example}[\textbf{continued}]
 Let the historical probability measure be $P=(0.3\,\,\,0.4\,\,\,0.3)$, the corresponding MEMM is $\tilde{Q}=(0.2866\,\,\, 0.3974\,\,\,0.3161)$. It holds $\tilde{Q}=\tilde{\lambda}Q_L+(1-\tilde{\lambda})Q_U$ with $\tilde{\lambda}=0.4541$.
   Table \ref{tabI} provides the minimal  relative entropy $I(\tilde{Q}, P)$ and the relative entropies of $Q_L$ and $Q_U$.
\begin{table}[h!]
\begin{center}\caption{}
\begin{tabular}{r |r    } \label{tabI}
Measure&Relative entropy\\
\hline
$Q_L$&$+\infty$\\
 $\tilde{Q}$&0.0073\\
$Q_U$&$+\infty$\\
 \end{tabular}
\end{center}
\end{table}
 Figure \ref{crossE} shows the simulated  entropy distribution across the class $\mathcal{Q}$ of risk-neutral martingale measures equivalent to $P$.

\begin{figure}[h!]
\caption{Relative entropy distribution across $\mathcal{Q}$.}\centering
\includegraphics[width=0.3\linewidth]{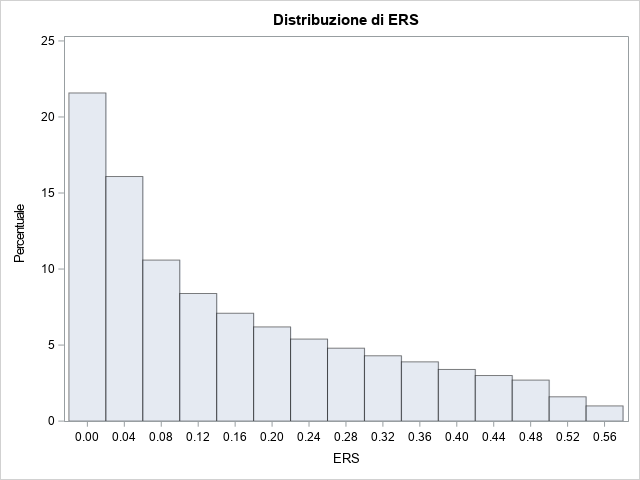}
\end{figure}
\label{crossE}
%
%
%
%
%
%
%
\end{example}

The next Section provides the analytical bounds, and therefore an analytical interval, for prices of convex derivatives in the whole class $\dym$.

\subsection{Analytical  price bounds on $\dym$}\label{AnalBounds}

Let $\phi$ be a positive  function. Suppose there is a contingent claim which pays off an amount $X=\phi(S_{N})$ at time $t_N$. Let $\pi(t, X)$ be the derivative value at time $t$. A no arbitrage  price of $X$ is given by
\begin{equation}
\pi(0, X)=\frac{1}{(1+R)^N}E^{Q^N}[{X}], \,\,\, Q\in \mathcal{P}_Q
\end{equation}
We can derive bounds for $\pi(0, X)$ as follows:
\begin{equation}\label{bounds}
\inf_Q\frac{1}{(1+R)^N}E^{Q^N}[{X}]\leq\pi(0, X)\leq\sup_Q\frac{1}{(1+R)^N}E^{Q^N}[{X}],
\end{equation}
where $Q^N$ is the risk-neutral distribution of $S_N$ corresponding to $Q$. Without loss of generality we assume $R=0$, $S_0=1$ and  \eqref
{MartCond0}
becomes
\begin{equation}\label{MartCond1}
E^Q[S_{n+1}|S_n=S_n^{k}]=S_n^{(k)},
\end{equation}
We define $\pi_L(0,X)=\inf_{Q^N}E^{Q^N}[{X}]$ and  $\pi_U(0,X)=\sup_{Q^N}E^{Q^N}[{X}]$.
Following \cite{cont2006model} we define model uncertainty $\mu_{\mathcal{P}_Q}=\pi_U(0,X)-\pi_L(0,X)$.
The following is a corollary to Proposition \ref{bound0}.
\begin{corollary}\label{bound}
In the one-period model the bounds in \eqref{bounds}  $\inf_QE^{Q}[{X}]$ and $\sup_QE^{Q}[{X}]$ are attained at two   extremal points of $\dy$.
\end{corollary}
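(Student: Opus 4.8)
The plan is to recognize this corollary as a direct specialization of Proposition \ref{bound0}, once the one-period risk-neutral measures have been identified with the polytope $\dy$. First I would record the effect of the normalizations $R=0$ and $S_0=1$ adopted just before the statement: at the single future date the stock takes the values $S_1=a_lS_0=a_l$, $l=1,\ldots,L$, so $S_1$ has support in $\mathcal{Y}=\{a_1,\ldots,a_L\}$, and the martingale condition \eqref{MartCond1} collapses to $E^Q[S_1]=S_0=1$. Hence a probability measure $Q$ is risk-neutral in the one-period model if and only if it is a discrete distribution on $\mathcal{Y}$ with mean $1$; by Theorem \ref{Qpolit} (equivalently Theorem \ref{discrgen} applied to $\mathcal{Y}$ with prescribed mean $1+R=1$) the set of such $Q$ is exactly the convex polytope $\dy=\mathcal{D}(\mathcal{Y},1)$, whose extremal points are the MMEGs $Q_{l_1,l_2}$.

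Next I would match the pricing functional to the hypothesis of Proposition \ref{bound0}. Since $N=1$ the discount factor $(1+R)^{-N}=1$, so the quantity appearing in \eqref{bounds} is simply $E^Q[X]=E^Q[\phi(S_1)]$. Writing $Y=S_1\sim Q$, this is a law-invariant functional of the form $\rho(Y)=E[\phi(Y)]$ for the measurable payoff $\phi$, and therefore $\inf_Q E^Q[X]$ and $\sup_Q E^Q[X]$ are precisely the quantities $\rho_L^{\dy}(Y)$ and $\rho_U^{\dy}(Y)$ of \eqref{bounds1} with $\mathcal{P}=\dy$. The corollary then follows by invoking Proposition \ref{bound0} verbatim: using the representation \eqref{rep0}, the value $E^Q[X]$ at an arbitrary $Q\in\dy$ is the convex combination $\sum_i\lambda_i\,E[\phi(Y^i)]$ of its values at the extremal measures $Q^i$, so $E^Q[X]$ ranges over a bounded interval whose endpoints are attained at two extremal points of $\dy$.

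There is no genuine difficulty here, as the statement is essentially Proposition \ref{bound0} restated in the pricing language. The only step requiring care is the first one: verifying that the normalizations $R=0$, $S_0=1$ make the one-period risk-neutral set coincide \emph{exactly} with $\dy$, so that Proposition \ref{bound0} applies without modification. Everything after that is a direct reading-off, with no calculation needed beyond the convexity argument already supplied in the proof of Proposition \ref{bound0}.
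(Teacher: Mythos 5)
Your proposal is correct and follows essentially the same route as the paper, which states the corollary without a separate proof precisely because it is Proposition \ref{bound0} applied with $\mathcal{Y}=\{a_1,\ldots,a_L\}$ and mean $1+R=1$. Your explicit verification that the normalizations $R=0$, $S_0=1$ make the one-period martingale condition \eqref{MartCond1} identify the risk-neutral measures with the polytope $\dy$ of Theorem \ref{Qpolit} is exactly the implicit step the paper relies on.
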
\label{priceBound}
%

Let $\pi_U(0,X_1)=E^{Q_U}[\phi(S_1)]$  and  $\pi_L(0,X_1)=E^{Q_L}[\phi(S_1)]$, they are the maximum and minimum possible prices and therefore any other price, corresponding to a martingale measure $Q$  can then be expressed by:
\begin{equation}\label{Lprice}\pi_Q(0,X)=\lambda\pi_L(0,X)+(1-\lambda)\pi_U(0,X), \,\,\, \lambda\in [0,1].
\end{equation}
The two extremal measures are not equivalent to $P$ and they generates two Binomial trees.
 Since we are able to analytically find all the generators, we can also find the maximum and minimum price for each contingent claim as defined above.

 Theorem \ref{Q:boundCall} is our main results and it provides analytical bounds for prices of convex derivatives in a multi-period lattice model.


\begin{theorem}\label{Q:boundCall}

Let $X=\phi(S_n), \, n\geq1$ and let $\phi$ be a convex function. We have:
\[
\pi_L(0,X)\leq \pi(0,X) \leq \pi_U(0,X)\]
where:
\[\pi_L(0,X)=E^{Q^n_L}[\phi(S_{n})], \]
with $Q^n_L=Q^n_{l_M, l_m}$, $l_M$ is the maximum index such that  $a_{l_M}\leq 1$ and $l_m$ is the minimum index such that  $a_{l_m}\geq1$,
and
\[\pi_U(0,X)=E^{Q^n_U}[\phi(S_n)], \]
with $Q^n_U=Q^n_{1, L}$.
\end{theorem}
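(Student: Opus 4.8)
The plan is to reduce the multi-period statement to the one-period convex-order result of Proposition \ref{cx} and then lift it to $n$ periods by induction on the number of steps. Writing the increments as $Z_1,\dots,Z_n$ so that $S_n=S_0\prod_{i=1}^n Z_i$ (with $S_0=1$, $R=0$, so each $Z_i$ has conditional mean $1$ and support in $\mathcal{A}=\{a_1,\dots,a_L\}$), I first note that for a single step the admissible increment laws are exactly the elements of $\mathcal{D}(\mathcal{A},1)$. Applying Proposition \ref{cx} to this class identifies its convex-order extremes as $Q^1_U=Q_{1,L}$, the widest binomial generator, and $Q^1_L=Q_{l_M,l_m}$, the tightest straddling generator around the mean; since $\phi$ is convex, Proposition \ref{cx} gives $E^{Q^1_L}[\phi(S_1)]\le E^{Q}[\phi(S_1)]\le E^{Q^1_U}[\phi(S_1)]$ for every one-period martingale measure $Q$, which is the base case.

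For the inductive step I would assume the statement $P(n-1)$: for every convex $g$ and every $(n-1)$-period martingale measure, $E^{Q^{n-1}_L}[g(S_{n-1})]\le E^{Q}[g(S_{n-1})]\le E^{Q^{n-1}_U}[g(S_{n-1})]$. Fixing an $n$-period martingale measure $Q$ and conditioning on the first increment via the tower property, I have $E^Q[\phi(S_n)]=E^Q\big[E^Q[\phi(S_n)\mid S_1]\big]$. Because the dynamics are multiplicative, conditionally on $S_1=s_1$ the subtree is an $(n-1)$-period martingale started at $1$, with $S_n=s_1\,\widetilde S_{n-1}$ where $\widetilde S_{n-1}=\prod_{i=2}^n Z_i$, and it is evaluated through the convex map $g_{s_1}(x):=\phi(s_1 x)$ (convex since $s_1>0$). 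Applying $P(n-1)$ to the conditional law, which is again an admissible $(n-1)$-period martingale measure, yields $E^Q[\phi(S_n)\mid S_1=s_1]\le h(s_1)$ with $h(s_1):=E^{Q^{n-1}_U}[\phi(s_1\,\widetilde S_{n-1})]$, and symmetrically a lower bound with $\underline h(s_1):=E^{Q^{n-1}_L}[\phi(s_1\,\widetilde S_{n-1})]$.

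The crucial step, and the one I expect to be the main obstacle, is showing that $h$ and $\underline h$ are themselves convex, since this is what allows the one-period result to be re-applied at the outer layer. Here I would exploit the fact that the extremal measures $Q^{n-1}_U$ and $Q^{n-1}_L$ use the same two-point transition at every node, so their probabilities do not depend on the current price level; consequently $h(s_1)=\sum_w \mathbb{P}(\widetilde S_{n-1}=w)\,\phi(s_1 w)$ is a fixed convex combination of the maps $s_1\mapsto\phi(s_1 w)$, each convex, and therefore $h$ is convex (and likewise $\underline h$). This level-independence of the binomial generators, together with the recombining multiplicative geometry, is exactly the feature that makes convexity propagate backward through the tree; in a general incomplete market with level-dependent risk-neutral transitions the weights would depend on $s_1$ and this propagation could fail.

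Finally, since the law of $S_1$ under $Q$ lies in $\mathcal{D}(\mathcal{A},1)$ and $h$ is convex, Proposition \ref{cx} applied at the top layer gives $E^Q[h(S_1)]\le E^{Q_{1,L}}[h(S_1)]$. Composing the first step under $Q_{1,L}$ with the subsequent $n-1$ steps under $Q^{n-1}_U$ reproduces the independent binomial measure over all $n$ periods, so $E^{Q_{1,L}}[h(S_1)]=E^{Q^n_U}[\phi(S_n)]$; chaining the inequalities proves $E^Q[\phi(S_n)]\le E^{Q^n_U}[\phi(S_n)]$. The identical argument with $Q_L=Q_{l_M,l_m}$ and $\underline h$ gives $E^{Q^n_L}[\phi(S_n)]\le E^Q[\phi(S_n)]$, which closes the induction and establishes both bounds.
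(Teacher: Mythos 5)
Your proof is correct, but it takes a genuinely different route from the paper's. The paper works on the additive scale: it writes $S_n=u^{n(L-1)-\sum_i Y_i}d^{\sum_i Y_i}$ in terms of i.i.d.\ down-jump counts $Y_i$, applies Proposition~\ref{cx} once at the single-step level, invokes the closure of the convex order under sums of independent random variables (\cite{shaked2007stochastic}) to pass to $n$ periods, and composes with $\psi(x)=\phi(u^{n(L-1)-x}d^{x})$, asserted convex as a ``composition of two convex functions.'' You instead stay on the multiplicative scale and prove by backward induction, via the tower property, what amounts to closure of the convex order under products of independent nonnegative factors: condition on the first increment, apply the inductive hypothesis to the convex section $g_{s_1}(x)=\phi(s_1x)$, and verify that $h(s_1)=E^{Q^{n-1}_U}[\phi(s_1\widetilde S_{n-1})]$ is convex because the extremal measures have level-independent two-point transitions, making $h$ a fixed convex mixture of the convex maps $s_1\mapsto\phi(s_1w)$. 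Your route buys robustness precisely where the paper's write-up is delicate: a convex $\phi$ composed with the convex decreasing map $x\mapsto c\,e^{-\beta x}$ need not be convex (for a put, $(K-c\,e^{-\beta x})^+$ is not convex in $x$), and the paper's claim $Y_1\in\mathcal{D}(\{0,\ldots,L-1\},1)$ is problematic since the martingale constraint fixes $E[a_{Y+1}]$, not $E[Y]$, so the equal-mean hypothesis of the convex order is not literally met in jump-count space; your base case applies Proposition~\ref{cx} in amplitude space, where the mean constraint $E[Z]=1+R$ \emph{is} the martingale condition. The price is the explicit induction and the convexity-propagation step, and your observation that propagation hinges on level-independence of the extremal binomial transitions is exactly the mechanism behind the convexity-preservation results (\cite{ekstrom2004properties}) the paper invokes for the American case.
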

\begin{proof}
%

Let $Y$ be the discrete random variables with support $\{0,\ldots, L-1\}$ and pmf $Q_Y$  representing the number of jumps down of $\SS$ in a unit time, i.e. $Q_Y(Y=k)=Q(S_{1}=u^{L-k-1}d^k)$, and let $Y_i\sim Y$ be independent for $i=1,\ldots, N$. We have:

\[
S_{n}=u^{n(L-1)-\sum_{i=1}^nY_i}d^{\sum_{i=1}^nY_i}, n=1,\ldots, N.
\]
We have to prove that
\[
E^{Q^{n}_L}[\phi(S_{n})]\leq E^{Q^{n}}[\phi(S_{n})]\leq  E^{Q^{n}_U}[\phi(S_{n})],
\]
we prove only
\[
 E^{Q^{n}}[\phi(S_{n})]\leq  E^{Q^{n}_U}[\phi(S_{n})],
\]
 because  proof of the other inequality  is similar.
We first prove the result for $n=1$.
For any  convex function $\phi$,
\[E^Q[\phi(S_1)]=E^{Q_Y}[\psi(Y_1)], \]
where $\psi(x)=\phi(u^{(L-1)-x}d^x)$ is a convex function since it is the composition of two convex functions.  We have  $Y_1\in \dy$ (since $R=0$), and by Proposition  \ref{cx}
\[
Y_L\leq_{cx}Y_1\leq_{cx}Y_U, \,\,\,\forall n\geq1\] where
$Y_L\sim Q_{Y,L}\in \dy$, $Y_U\sim Q_{Y,U}\in \dy$, $Q_{Y,L}=Q_{y_M,y_m}$ with $y_M$ being the largest $y\in \mathcal{Y}$ smaller than $\mu=1$ and $y_m$ the smallest $y\in \mathcal{Y}$ larger than $\mu=1$, and $Q_{Y,U}=Q_{1,{L-1}}$.
Let $S_1^L=u^{(L-1)-Y_L}d^{Y_L}$; $S_1^U= u^{(L-1)-Y_U}d^{Y_U}$, we have $S_1^L\sim Q_L$ and $S_1^U\sim Q_U$ by construction and:

\[
E[\phi(S_1)]=E[\psi(Y_1)]\leq E[\psi(Y_U)]=E[\phi(S_1^U)],
\]
where the inequality holds observing that if $\phi$ is convex $\psi$ is convex.
Since $Y_i$ are i.i.d. by  the closure properties of the convex order (see \cite{shaked2007stochastic}) we have

\begin{equation}\label{cxX}
\sum_{i=1}^nY^i_L\leq_{cx}\sum_{i=1}^n Y_i\leq_{cx}\sum_{i=1}^nY^i_U,\end{equation} where $Y_L^i\sim Y_L$ and are independent, $Y_U^i\sim Y_U$ and are independent. We finally have:

\[
E^{Q^N}[\phi(S_n)]=E^{Q^N_{x}}[\psi(\sum_{i=1}^nY_i)]\leq E^{Q^N_{Y,U}}[\psi(\sum_{i=1}^nY_U^i)]=E^{Q^N_U}[\phi(S_n^U)],
\]
where the first and the last equality are by construction and the inequality follows  from \eqref{cxX}.
\end{proof}

Corollary \ref{cor} comes observing that  $f(y)= (K-y)^+$ [$f(y)=(y-K)^+$] is a convex function.

\begin{corollary}\label{cor}
Let $X=(K-S_N)^+$ [$X=(S_N-K)^+$] be an European put [call] option. Then, for any choice of the strike price $K$ the risk-neutral probability that gives the lower price is
$Q^N_I=Q^N_{l_M, l_m}$, where $l_M$ is the maximum index such that  $a_{l_M}\leq 1$ and $l_m$ is the minimum index such that  $a_{l_m}\geq1$, and the risk-neutral probability that gives the higher price is $Q^N_U=Q^N_{1, L}$.
\end{corollary}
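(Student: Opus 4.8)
The plan is to read Corollary \ref{cor} as an immediate specialization of Theorem \ref{Q:boundCall} to the two canonical convex payoffs. First I would verify that the put and call payoffs are convex: writing $\phi(s)=(K-s)^+=\max\{K-s,\,0\}$ for the put and $\phi(s)=(s-K)^+=\max\{s-K,\,0\}$ for the call, each is the pointwise maximum of an affine function and the constant $0$, hence convex on $\RR$ for every fixed strike $K$. This is the only verification needed to bring the two options within the scope of Theorem \ref{Q:boundCall}, which applies to $X=\phi(S_n)$ for any convex $\phi$.

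Next I would apply Theorem \ref{Q:boundCall} directly with $n=N$ and $\phi$ equal to the relevant payoff. The theorem then asserts that $\pi_L(0,X)=E^{Q^N_L}[\phi(S_N)]$ with $Q^N_L=Q^N_{l_M,l_m}$, where $l_M$ and $l_m$ are the indices singled out in its statement, and that $\pi_U(0,X)=E^{Q^N_U}[\phi(S_N)]$ with $Q^N_U=Q^N_{1,L}$. Relabelling $Q^N_L$ as $Q^N_I$ in the corollary's notation then reproduces verbatim the claimed lower-price and higher-price measures, so the corollary follows.

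The point worth emphasizing --- and what makes the statement more than a restatement of Proposition \ref{bound0} --- is the uniformity in $K$. Proposition \ref{bound0} guarantees only that the bounds are attained at \emph{some} extremal measures, which could in principle depend on the payoff and hence on $K$. By contrast, the convex-order chain $Y_L\leq_{cx}Y_i\leq_{cx}Y_U$ underlying Theorem \ref{Q:boundCall} dominates \emph{every} convex $\phi$ simultaneously, so the same pair $(Q^N_I,Q^N_U)$ extremizes the price for all convex payoffs at once; since $s\mapsto(K-s)^+$ and $s\mapsto(s-K)^+$ remain convex for every value of $K$, the extremizing measures do not move with the strike. I therefore expect essentially no obstacle: once convexity of the two payoffs is noted, the corollary is a one-line consequence of Theorem \ref{Q:boundCall}, the only care being to match the index bookkeeping $Q^N_I=Q^N_{l_M,l_m}$ with the theorem's $Q^N_L$ and to record that the identification is independent of $K$.
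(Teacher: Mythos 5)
Your proposal is correct and matches the paper's own argument exactly: the paper proves Corollary \ref{cor} in one line by observing that $(K-y)^+$ and $(y-K)^+$ are convex and invoking Theorem \ref{Q:boundCall}. Your additional remark on uniformity in $K$ (the same extremal pair works for every strike, via the convex order) is a sound elaboration of why the theorem, rather than Proposition \ref{bound0} alone, yields the claim.
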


Increasing the number of steps the bounds found in Corollary \ref{cor} goes to the analytical no arbitrage bounds  $1/(1+R)^n\phi(S_0(1+R)^n)$ and $S_0$. This is not surprising, since \cite{eberlein1997range} proved that most of the incomplete models span the whole range of admissible prices by changing the risk-neutral measure and multinomial models are approximating models. For this reason, in our example we will study empirical bounds on a subclasses of $\dym$, to show how the knowledge of the class $\dym$ allow us to measure the distribution risk associated with  perturbation of the risk-neutral measure chosen.

The analytical bounds are more interesting for American options, that are also convex derivatives.  American options have the same payoff of European options at maturity $N$, but they can be exercised at any time before maturity, i.e. any $n$, $ 0\leq n\leq N$, with payoff $\phi(S_n)$.
To  price  an American option one must account for the possible exercise policies.
In the simple case without dividends,
the value of the American call is equal to that of the European call and  holding to
expiration is optimal. On the contrary, the optimal exercise policy for American put options  is not to hold until expiration no matter what. Therefore, the price of a put option is higher than the price of the corresponding European put.
Proposition \ref{Q:boundPutA} also follows from convexity of the payoff, see \cite{ekstrom2004properties}, and gives analytical bounds for the price of an American put option.

\begin{proposition}\label{Q:boundPutA}
Let us consider an American put option with maturity $N$ and final payoff $X=(K-S^N)^+$. Then, for any choice of the strike price $K$, the risk-neutral probability that gives the lower no arbitrage price is
$Q_L=Q_{l_M, l_m}$, where $l_M$ is the maximum index such that  $a_{l_M}\leq 1$ and $l_m$ is the minimum index such that  $a_{l_m}\geq1$, and the risk-neutral probability that gives the higher price is $Q_U=Q_{1, L}$.
\end{proposition}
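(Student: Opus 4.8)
The plan is to characterize the American put value by its dynamic programming (Snell envelope) recursion and then run a backward induction on the time index that simultaneously propagates convexity of the value function and the desired domination. Recall that, under a per-step martingale multiplier law $Q=(q_1,\ldots,q_L)\in\mathcal{D}(\mathcal{A},1)$ applied i.i.d.\ across steps (with $R=0$), the American put value $V^Q_n(s)$ at a node with $S_n=s$ satisfies $V^Q_N(s)=(K-s)^+$ and
\begin{equation}
V^Q_n(s)=\max\Big\{(K-s)^+,\,(\mathcal{T}_Q V^Q_{n+1})(s)\Big\},\qquad (\mathcal{T}_Q g)(s):=\sum_{l=1}^L q_l\,g(a_l s),
\end{equation}
so that the price in question is $V^Q_0(1)=\sup_{\tau\le N}E^Q[(K-S_\tau)^+]$. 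The operator $\mathcal{T}_Q$ has two properties I would use repeatedly: it is monotone ($g_1\le g_2\Rightarrow \mathcal{T}_Q g_1\le\mathcal{T}_Q g_2$, since the $q_l\ge0$), and it maps convex functions to convex functions (each $s\mapsto g(a_l s)$ is convex when $g$ is, and a non-negative combination of convex functions is convex).

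First I would establish, by backward induction on $n$, that $s\mapsto V^Q_n(s)$ is convex for every $Q$: this holds at $n=N$ because $(K-s)^+$ is convex, and it is preserved because $\mathcal{T}_Q$ maps convex to convex and the pointwise maximum of two convex functions is convex. The role of convexity is that it unlocks the one-step comparison coming from Proposition \ref{cx}: for fixed $s>0$ the map $t\mapsto g(st)$ is convex, the one-step multiplier $Z_Q$ under $Q$ has law in $\mathcal{D}(\mathcal{A},1)$, and Proposition \ref{cx} gives $Z_L\le_{cx}Z_Q\le_{cx}Z_U$ with $Z_L\sim Q_L=Q_{l_M,l_m}$ (the narrowest generator, straddling $1+R=1$) and $Z_U\sim Q_U=Q_{1,L}$ (the widest). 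Hence, for every convex $g$,
\begin{equation}
(\mathcal{T}_{Q_L} g)(s)\ \le\ (\mathcal{T}_{Q} g)(s)\ \le\ (\mathcal{T}_{Q_U} g)(s)\qquad\text{for all }s>0.
\end{equation}

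With these two ingredients I would prove the upper bound by induction on $n$ with the hypothesis $V^{Q}_{n+1}\le V^{Q_U}_{n+1}$, the convexity of $V^{Q_U}_{n+1}$ being already in hand. At step $n$ one chains monotonicity and the convex-order inequality: $\mathcal{T}_Q V^{Q}_{n+1}\le \mathcal{T}_Q V^{Q_U}_{n+1}\le \mathcal{T}_{Q_U} V^{Q_U}_{n+1}$, where the first inequality uses the induction hypothesis and monotonicity of $\mathcal{T}_Q$, and the second uses the displayed one-step comparison applied to the convex function $V^{Q_U}_{n+1}$. Taking the maximum with the common exercise value $(K-s)^+$ yields $V^{Q}_n\le V^{Q_U}_n$, completing the induction. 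The lower bound is symmetric: with hypothesis $V^{Q_L}_{n+1}\le V^{Q}_{n+1}$ one chains $\mathcal{T}_{Q_L}V^{Q_L}_{n+1}\le \mathcal{T}_{Q_L}V^{Q}_{n+1}\le \mathcal{T}_{Q}V^{Q}_{n+1}$ (monotonicity, then the convex-order inequality applied to the convex $V^{Q}_{n+1}$). Evaluating at $n=0$, $S_0=1$, gives $V^{Q_L}_0(1)\le V^{Q}_0(1)\le V^{Q_U}_0(1)$ for every risk-neutral $Q$, which is the claim.

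The delicate point, and the reason a naive appeal to the European bound of Theorem \ref{Q:boundCall} does not suffice, is that the continuation value at step $n$ is itself computed under $Q$ at all later steps, so one cannot change the measure at a single step in isolation. This is exactly what the two-step chain resolves: the monotonicity step absorbs the later-step discrepancy between $V^Q$ and $V^{Q_U}$ (resp.\ $V^{Q_L}$), while the convex-order step handles the current-step change of multiplier law, and convexity must be carried through the induction precisely so that the latter step is licensed at every node. This backward-induction comparison is the discrete incarnation of the monotonicity of American option prices in the convex (volatility) order established in \cite{ekstrom2004properties}; the closure of the convex order under the i.i.d.\ structure, used for the European case via \cite{shaked2007stochastic}, is here replaced by the pathwise dynamic-programming recursion.
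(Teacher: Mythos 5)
Your proof is correct, and it is a more careful rendering of the same underlying idea as the paper's. The paper's proof also works node by node through the tree: at each node $(n,k)$ it invokes Corollary \ref{cor} to bound the continuation value between the values under $Q_{l_M,l_m}$ and $Q_{1,L}$, then takes the maximum with the exercise payoff, with convexity (citing \cite{ekstrom2004properties}) doing the work implicitly. Where you genuinely improve on it: the paper, as written, applies the European bound to $E[(K-S^N)^+\mid S_n=S_n^{(k)}]$, which is a European-style continuation value, not the American one $E[V^Q_{n+1}(S_{n+1})\mid S_n]$; it never states the induction hypothesis $V^Q_{n+1}\le V^{Q_U}_{n+1}$ (resp.\ $V^{Q_L}_{n+1}\le V^Q_{n+1}$), never records that convexity of the value function must be propagated backward (the max of the convex exercise and continuation values, with $\mathcal{T}_Q$ preserving convexity), and does not isolate the two-step chain $\mathcal{T}_Q V^{Q}_{n+1}\le \mathcal{T}_Q V^{Q_U}_{n+1}\le \mathcal{T}_{Q_U}V^{Q_U}_{n+1}$ in which monotonicity absorbs the later-step discrepancy and the one-step convex-order comparison from Proposition \ref{cx} handles the change of multiplier law at the current step. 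Your ``delicate point'' paragraph names exactly the gap the paper's terse write-up glosses over, and your Snell-envelope formulation supplies the missing steps; what the paper's version buys in exchange is brevity and a direct reuse of Corollary \ref{cor}. One cosmetic remark: the statement's index convention ($l_M$ the \emph{maximum} index with $a_{l_M}\le 1$) sits awkwardly with $a_l$ decreasing in $l$, but you read $Q_{l_M,l_m}$ correctly as the narrowest generator straddling $1+R$, consistent with Proposition \ref{cx}, so no harm is done.
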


\begin{proof}


Let $E^Q_k[X]=E^Q[X|S_n=S_n^{(k)}]$. Consider the multinomial tree starting in node $(n,k)$, for any $(n,k)$.
Since
\begin{equation}\label{MartCond}
E^Q[S_{n+m+1}|S_{n+m}=S_{n+m}^{(k)}]=S_{n+m}^{(k)}.
\end{equation}
 is equivalent to

\begin{equation}\label{MartCond2}
\sum_{l=0}^La_{l}q_l=1,
\end{equation}
by Corollary \ref{cor} the continuation value $\pi(n, k, X)=E[\phi(K-S^N)^+|S_n=S_n^k]$  of  the put option satisfies
\begin{equation*}\label{boundsP}
\pi_L(n, k,X)\leq\pi(n,k , X)\leq\pi_U(n, k, X).
\end{equation*}
Thus
\begin{small}
\begin{equation*}
\max\{\pi_L(n, k, X), (S_n^k-K)^+\}\leq \max\{\pi(n, k, X), (S_n^k-K)^+\}\leq \max\{\pi_U(n, k, X), (S_n^k-K)^+\},
\end{equation*}
for any $(n,k)$ and the assert is proved.
\end{small}

\end{proof}

\setcounter{example}{0}
\begin{example}[\textbf{continued}]
Let us consider  a call option  on $S_t$.
The price of a call option with strike $K=100$ and maturity $T=t_N=N=1$ is bounded by the prices obtained with  the Binomial trees $l_1, l_3$, that is  $c_{13}=20.490$, and $l_2, l_3$, that  is  $c_{23}=5.3922$. All the no arbitrage prices  are convex combinations of $c_{13}$ and $c_{23}$ that are the upper and lower bounds.The MEMM price is $\tilde{c}=16.637$. Figure \ref{crossP}, left side, shows the distribution of prices across $\mathcal{P}_Q$.

\begin{figure}[h!]
\caption{Price distribution for $L=3$,  $N=1$ (left) and price distribution for $L=3$,  $N=10$ (right) .}\centering
\begin{subfigure}[b]{0.3\textwidth}
\includegraphics[width=\textwidth]{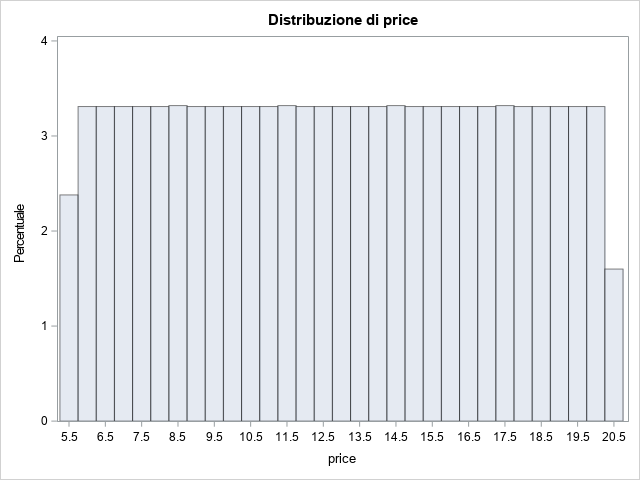}
\end{subfigure}
\begin{subfigure}[b]{0.3\textwidth}
\includegraphics[width=\textwidth]{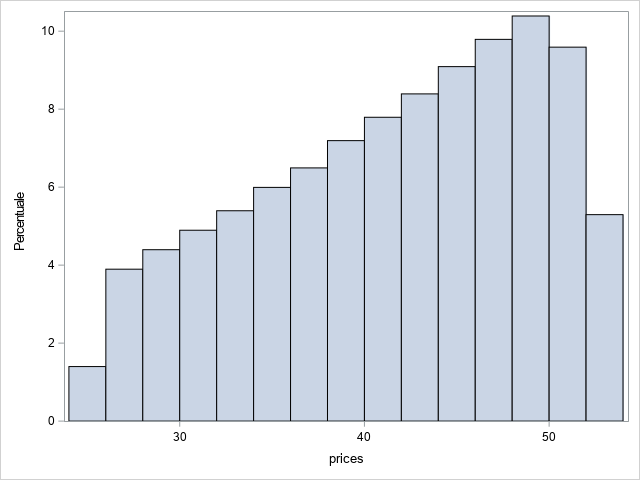}
\end{subfigure}
\label{crossP}
\end{figure}

%
Consider now ten steps, $T=N=10$. The ten periods trinomial model has the above risk-neutral one step probabilities, i.d. the risk-neutral probabilities in the polytope generated by $Q_{{1},3}$ and $Q_{{2},3}$. According to proposition \ref{Q:boundCall}  the risk-neutral prices are convex combinations of the two Binomial prices $c^{10}_{13}=53.4648$ and $c^{10}_{23}=25.2834$, that are the lower and upper bounds for a risk-neutral price. The MEMM price is $\tilde{c}^{10}=43.7139$. Figure \ref{crossP} (right side) shows the price distribution across the class.

%
%

\end{example}
We conclude this section with  an example with $L=4$ states of the world. This is the simplest case  with more than two generators, for this reason we keep Example \ref{EL4} also to illustrate the distribution risk across $\mathcal{Q}$ and the effect on prices of small perturbation of the MEMM $\tilde{Q}$.

\setcounter{example}{1}
\begin{example}
\label{EL4}
Let us keep $u=1.2,\,  d=0.8\, R=2\%,\,S_0=100, \, K=100$ as in Example \ref{trinomial}. Let us consider a one step multinomial model with $L=4$. The no arbitrage condition in satisfied.
The support of $S_1$ is $\{172.8; 115.2;76.8;51.2 \}$. The generating Binomial trees, their risk-neutral probabilities and the corresponding option prices are in Table \ref{tabray}.

\begin{table}[h!]
\begin{center}\caption{MMEGs and corresponding one-period call option prices with  $u=1.2,\,  d=0.8\, R=2\%,\,S_0=100, \, K=100$. We have $l_1<l_2$ and $a_{l_1}>a_{l_2}$.}
\begin{tabular}{r |r | r  |r } \label{tabray}
Binomial tree&$q_{l_2}$&$q_{l_1}$&Price\\
\hline
$c_{13}$& 0.2062
  & 0.7938&	11.8284\\
$c_{14}$&0.5822&
  0.4178&29.8168 \\
$c_{23}$& 0.3437&
   0.6563 & 9.7794  \\
$c_{24}$& 0.7375&
    0.2625&18.7353\\
\end{tabular}
\end{center}
\end{table}
The MMEGs $Q_L$ and $Q_U$, the MEMM $\tilde{Q}$ for a reference historical measure $P=(0.09\,\,\,  0.40\,\,\, 0.47 \,\,\,0.04)$ with support $S_1$  and the corresponding relative entropies are in Table \ref{tabRE}.
\begin{table}[h!]
\begin{center}\caption{Prices bounds and their generators,  MEMM and corresponding price. Probabilities entries corresponds to the points of the support in decreasing order.}
\begin{tabular}{r |r | r  |r } \label{tabRE}
Measure&probabilities&Relative entropy&Price\\
\hline
$Q_L$&$(0\,\,\,0.3437,\,\,\,0.6562\,\,\,0)$&$+\infty$&9.7794\\
 $Q_U$&$(0.5822\,\,\,0\,\,\,0\,\,\,0.4178)$&$+\infty$&20.8168\\
 $\tilde{Q}$&($0.0684\,\,\,0.5280\,\,\,0.3402\,\,\,0.0636)$&0.0207&12.7391\\
 \end{tabular}
\end{center}
\end{table}
%
%
%
%
%
%
%
%
%
%
%
%
%
%
%
%
The maximum and minimum prices are in correspondence of the Binomial trees $c_{14}$ and $c_{23}$, respectively.
\end{example}

In Theorem \ref{Q:boundCall} and Proposition \ref{Q:boundPutA} analytical bounds are found
using two risk-neutral measures that are not equivalent to $P$, since they are extremal and have support on two points and they generate two Binomial trees.  Section \ref{priceQ} finds the bounds in $\mathcal{Q}$ and empirically shows that the bounds found are close to the analytical.

\subsection{Price distribution and bounds on $\mathcal{Q}$}\label{priceQ}

 In this section we look for the bounds in $\mathcal{Q}$ and for the price distribution across $\mathcal{Q}$, the class of measures in $\dym$ equivalent to $P$. We  proceed by simulations, building on the simulation algorithm developed in \cite{fontana2021exchangeable}. We simulate uniformly from $\dym$ and throw out the measures  $Q$ not equivalent to $P$. According to Remark \ref{Remark:abscont} all the $Q$ randomly extracted from $\dym$ have full support, i.e. they belong to  $\mathcal{Q}$.

\setcounter{example}{1}
\begin{example}[\textbf{continued}]
\label{EL5}
We now consider the class  $\mathcal{Q}\subset\dym$, we recall that: $L=4,\,u=1.2,\,  d=0.8\, R=2\%,\,S_0=100, \, K=100$. We first consider one step, i.e. $N=1$. Figure \ref{cross1} (left side) shows the distribution of the relative entropy across the class and figure \ref{cross1} (right side) shows the distribution of prices across the class $\mathcal{Q}$.
\begin{figure}[h!]
\caption{Entropy (left) and price (right) distribution for $L=4$ and  $N=1$.}\centering
\begin{subfigure}[b]{0.3\textwidth}
\includegraphics[width=\textwidth]{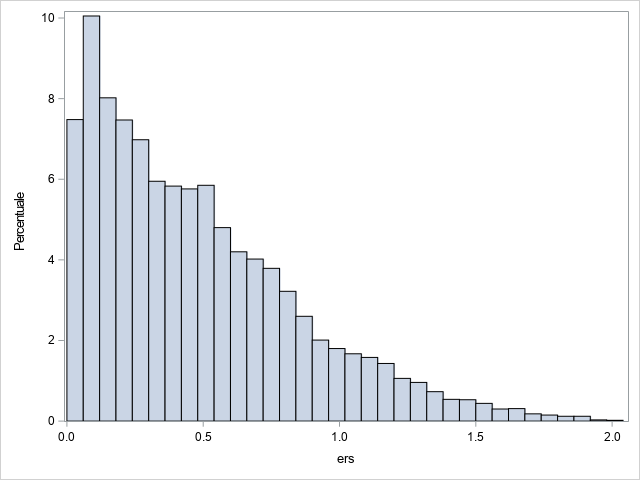}
\end{subfigure}
\begin{subfigure}[b]{0.3\textwidth}
\includegraphics[width=\textwidth]{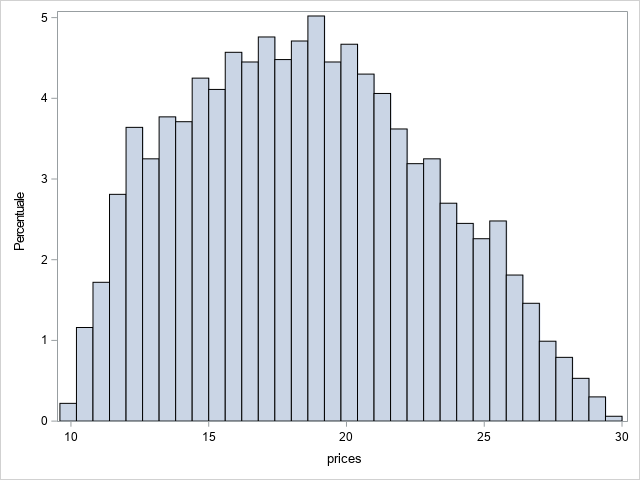}
\end{subfigure}
\label{cross1}
\end{figure}
Figure \ref{cP-10} shows the price distribution across the class in a ten step multinomial model, i.e. $N=10$.
%

%
\begin{figure}[h!]
\caption{Price  distribution for $L=4$ and  $N=10$.}\centering
\includegraphics[width=0.3\linewidth]{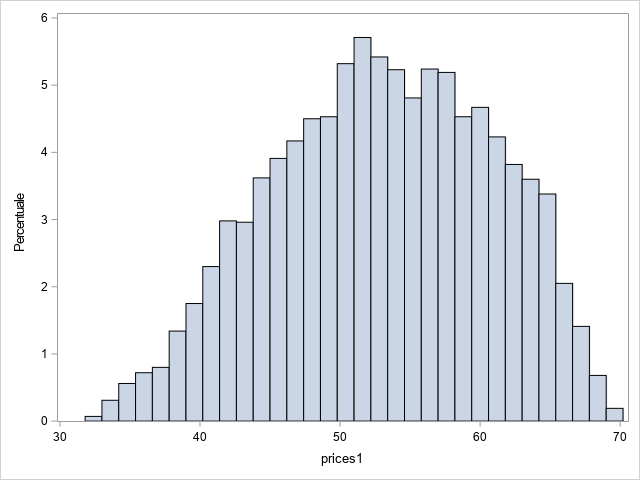}
\end{figure}\label{cP-10}
Table \ref{tabN10} gives the MEMM price and the analytical bounds with $N=10$ and, as one can see the simulated bounds across the class of  equivalent measures $\mathcal{Q}$ are close to the analytical bounds found on the MMEGs.
\begin{table}[h!]
\begin{center}\caption{Analytical price bounds and MEMM call option  with  $L=4,\, u=1.2,\,  d=0.8\, R=2\%,\,S_0=100, \, K=100,\,N=10$. }
\begin{tabular}{r |r  } \label{tabN10}
Martingale measure&Price\\
\hline
$Q_{U}$& 70.0699\\
$Q_{L}$& 31.9443 \\
$\tilde{Q}$&41.3017\\
\end{tabular}
\end{center}
\end{table}

In this case the analytical bounds given by the no arbitrage condition ({\bf check}) are $1/(1+R)^n(max(S_0(1+R)^n-K,0))= 17.9652$ and $S_0=100$.

\end{example}
\subsection{MEMM perturbation: effect on prices}
This section develops an example where we measure  the risk associated with small perturbation of the MEMM. Also in this case we do not have analytical bounds and we proceed by simulations. In the following continuation of Example \ref{EL4}, we  simulate the distribution of prices across
the set $
\Gamma(\tilde{Q}, \epsilon)$ in \eqref{gamma}, following the algorithm in \cite{fontana2021exchangeable}, reported in Appendix \ref{US} and find the simulated bounds for prices.

\setcounter{example}{1}
\begin{example}[\textbf{continued}]
\label{EL6}

Figure \ref{palla} shows the polytope $\mathcal{D}(\mathcal{Y}, \mu)$, dots represent uniformly simulated $q\in \mathcal{Q}$ outside the set $\Gamma(\tilde{Q}, 0.05)$, that is the white set inside the polytope. The point inside the white  ball $\Gamma(\tilde{Q}, 0.05)$ is $\tilde{Q}$.

\begin{figure}[h!]
\caption{Polytope $\mathcal{P}_Q$, sampled $q$ in $\mathcal{Q}$,  set $
\Gamma(P, 0.05)$ -white ball- for $L=4$.}\label{palla}\centering
\includegraphics[width=0.4\linewidth]{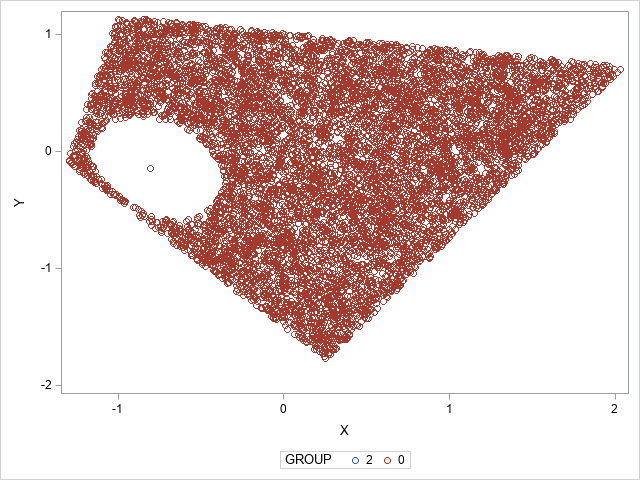}
\end{figure}

 Table \ref{tabPriceball} exibits (simulated)  price bounds on $\Gamma(\tilde{Q}, 0.05)$ obtained with a one-step and $N=10$ step binomial tree.

\begin{table}[h!]
\begin{center}\caption{Simulated price bounds on  $\Gamma(\tilde{Q}, 0.05)$ and MEMM call option  with  $L=4,\, u=1.2,\,  d=0.8\, R=2\%,\,S_0=100, \, K=100,\,N=10$. }
\begin{tabular}{r |r | r   } \label{tabPriceball}
&Price N=1& Price N=10\\
\hline
max& 15.2084&47.3172\\
min& 10.6699&35.2663\\
$\tilde{Q}$&12.7391&41.3017\\
\end{tabular}
\end{center}
\end{table}

Finally, Figure \ref{crosspalla}  shows the distribution of prices across the class $\Gamma(\tilde{Q}, 0.05)$ for $N=1$ and $N=10$ steps.

\begin{figure}[h!]
\caption{Price distributions across  $\Gamma(\tilde{Q}, 0.05)$ for $N=1$ (left) and $N=10$ (right) steps ($L=4$).}\label{crosspalla}\centering
\begin{subfigure}[b]{0.3\textwidth}
\includegraphics[width=\textwidth]{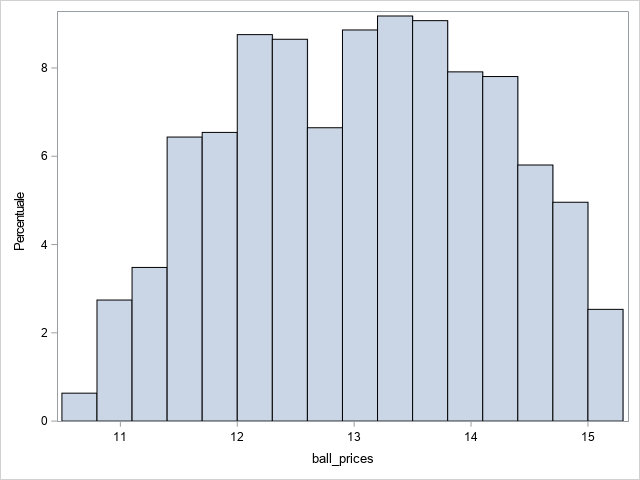}
\end{subfigure}
\begin{subfigure}[b]{0.3\textwidth}
\includegraphics[width=\textwidth]{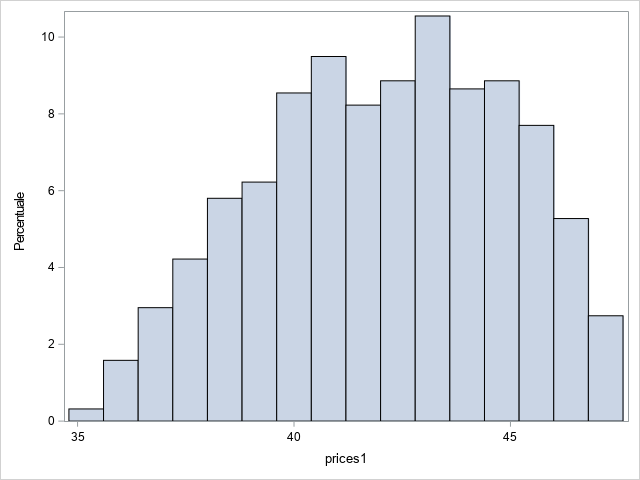}
\end{subfigure}
\end{figure}

As one can see comparing Table \ref{tabPriceball} and \ref{tabN10}, the range of prices spanned on a perturbation of the measure $\mathcal{Q}$ is significantly smaller than the whole range of prices.

\end{example}

\section{Example on real data}\label{real}


As a first illustrative application we study the distribution of risk-neutral European call prices and their bounds for a lattice with realistic parameters, i.e. parameters are chosen to match the moments of real asset returns.
Empirical study of financial log returns data shows the presence of significant degree
of skewness and excess kurtosis.
To incorporate such information, higher
order lattices are used. We follow \cite{yamada2004properties} to calibrate the pentanomial lattice parameters on the empirical moments of  observed log-returns.
Let us consider  i.i.d.  daily log returns
\begin{equation}
R_t=log(\frac{S_{t+1}}{S_t}), \,\,\, t=0,\ldots T-1,
\end{equation}
To incorporate mean, variance skewness and kurtosis in the price distribution we calibrate
the probabilities $P=(p_1,\ldots, p_5)$ by solving the
five linear equations (2.11)  in \cite{yamada2004properties}. Their expression and the corresponding jump amplitudes $(a_1,\ldots, a_5)$ are in equation (16) in \cite{ssebugenyi2013minimal}. We report them in Appendix \ref{Pentanomial}. Notice that jump amplitudes depend on the kurtosis and not on the skewness of log-returns, while $p$ depends on both. This is because the jump amplitudes are assumed to be symmetric by construction.
Applying Theorem \ref{Qpolit} we can find the generators of the polytope of the risk-neutral probabilities corresponding to the pentanomial lattice and consequently we can find the minimum and maximum option prices, the MEMM measure and the corresponding price. We  proceed as follows:
\begin{enumerate}
\item  We find the empirical mean, variance, skewness and kurtosis of daily returns: $\hat{\mu}$, $\hat{\sigma}$, $\hat{s}$ and $\hat{k}$;
\item we find the  pentanomial lattice jump amplitudes and historical probabilities using Equations \eqref{Ja} and \eqref{Jp}, respectively;
\item we find the MMEGs and   the MEMM;
\item we consider European options and we find the minimal entropy price, bounds for the risk-neutral prices of call and put options in the whole class of martingale measures and in a class of perturbation of the MEMM and - for $N=4$ years- the call price distribution across the class;

\end{enumerate}
We consider daily logreturns of Moncler  S.p.A. (MONC.MI) on FTSE MIP Market
Index from November 10, 2017 to November 10, 2022, with a total of 1270  daily  observations.
The empirical  daily mean, variance, skewness and excess kurtosis are reported in Table \ref{bla}.
\begin{table}[h!]
\begin{center}\caption{}\label{bla}
\begin{tabular}{r |r | r  |r }
$\hat{\mu}$& $\hat{\sigma}$& $\hat{s}$ & $\hat{k}$\\
\hline
0.0006& 0.0005& 0.1019& 4.4305\\
\end{tabular}
\end{center}
\end{table}
We have $S_0=21.5381$.
 We choose the annual rate $R=0.05$, we consider four maturities $T=1; 2;3;4;5$ years and K=22.
 Since the only purpose  of this application is to show how the methodology applies with realistic numbers, we construct a simple toy example.
  Each step is one year, so we have  $N=T$ steps. For $T=1$ we have the one-period model and we  use annualized parameters to find the jump amplitudes and the historical probability. Then we show the results we can obtain in a $N$-period model simply considering $T=2,3,4$ years.
We provide as a benchmarks in Table \ref{BS} the  Black and Scholes put and call prices.
\begin{table}[h!]
\begin{center}\caption{Black and Scholes put and call prices}\label{BS}
\begin{tabular}{c|c|c}
&Call Price&Put Price\\
\hline
$T=1$& 3.3025& 2.7031\\
$T=2$& 4.9826&3.3509\\
$T=3$ &6.3034&3.7009\\
$T=4$& 7.4272& 3.9011\\
\end{tabular}
\end{center}
\end{table}
The  jump amplitudes obtained from the annualized  empirical mean, variance, skewness and excess kurtosis  are in Table \ref{amp}.
\begin{table}[h!]
\begin{center}\caption{}\label{amp}
\begin{tabular}{r |r | r  |r |r}
$a_1$&$ a_2$&$a_3$& $a_4$&$a_5$\\
\hline
    2.3817&
    1.6667&
    1.1664&
    0.8162&
    0.5712\\
\end{tabular}
\end{center}
\end{table}
The no arbitrage condition is satisfied with $R=0.05$.
The historical probability $P$ is in Table \ref{ph}.
\begin{table}[h!]
\begin{center}\caption{}\label{ph}
\begin{tabular}{r |r | r  |r|r }
$p_1$& $p_2$&$p_3$& $p_4$&$p_5$\\
\hline
   0.0829&
    0.1656&
    0.5029&
    0.1658&
    0.0828  \\
\end{tabular}
\end{center}
\end{table}

The  MMEGs and the corresponding risk-neutral European call and put prices for $T=1$ (one-period tree) are in Table \ref{ppG}.
The call price bounds on the whole polytope are $c_{34}=1.9847$ and $c_{15}=7.3789$, the analytical no arbitrage bounds are $1/(1+R)(max(S_0(1+R)-K,0))=0.5858$ and $S_0=21.5831$.
\begin{table}[h!]
\begin{center}\caption{prices corresponding to the MMEGs, maximum price corresponds to $Q_{34}$ and minimum price corresponds to $Q_{15}$.}\label{ppG}
\begin{tabular}{r |r | r  |r }
\text{Measure}&\text{ Probabilities}&\text{Call Prices}&\text{Put Prices}\\
\hline
$Q_{14}$& (0.1493,0,0,0.8507 ,0)&   4.1666 &3.5809\\
$Q_{15}$&(0.2645,0,0,0,0.7355)&      7.3789 &6.7932\\
$Q_{24}$&(0,0.2749,0, 0.7251,0)&    3.6382 &3.0524
\\
$Q_{25}$&( 0,0.4371,0,0,0.5629) &    5.7849 &5.1991
\\
$Q_{34}$&( 0,0,0.677,0.3323,0) &     1.9847  & 1.3990\\
$Q_{35}$&( 0,0,0.8045,0,0.1955) &    2.3915&1.8057
\end{tabular}
\end{center}
\end{table}
Table \ref{tq} provides the MEMM probability $\tilde{Q}$. The binomial trees ($T=2$) corresponding to the minimum and maximum price are shown in Figure \ref{TP2}.

\begin{figure}[h!]
\caption{Trinomial tree. The red line indicates the Binomial tree $3,4$; the green line indicates the Binomial tree $1,5$. The orange point is the overlapping of the two Binomial trees.}\label{TP2}\centering
\includegraphics[width=0.5\linewidth]{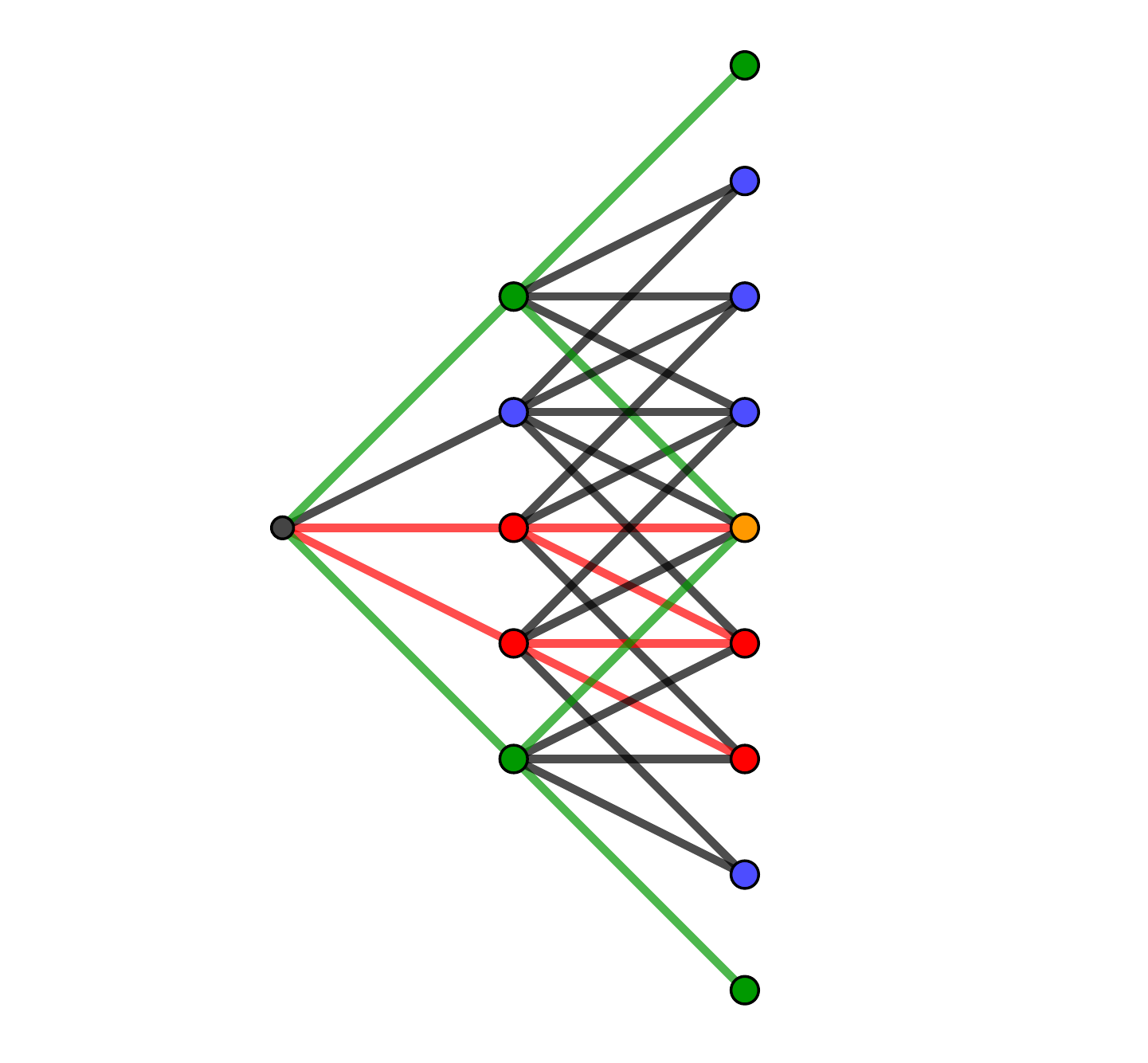}
\end{figure}

%
%
\begin{table}[h!]
\begin{center}\caption{MEMM probability}\label{tq}
\begin{tabular}{r |r | r  |r|r }
$\tilde{q}_1$& $\tilde{q}_2$&$\tilde{q}_3$&$ \tilde{q}_4$&$\tilde{q}_5$\\
\hline
       0.0816&  0.0877&  0.4860& 0.2440& 0.1637
\\
\end{tabular}
\end{center}
\end{table}
Now, for $T=1;2;3;4$ we find the MEMM prices and compare them with the analytical bounds of the entire $\dym$ in
Table \ref{prices}.
\begin{table}[h!]
\begin{center}\caption{European call prices under $\tilde{Q}$ for $T=N=1;2;3,4$ years.}\label{prices}
\begin{tabular}{r |r | r |r||r|r|r}
T&$\tilde{Q}$-call&Min-call&Max-call&$\tilde{Q}$-put&Min-put &Max-put\\
\hline
$1$&3.1245&1.9847&7.3789&2.5387&1.3990&6.7932\\
$2$&4.6914&2.9518&8.3910&3.1079&1.3684&7.3474\\
$3$&5.9427&3.8658&10.6681&3.4090&1.3321&8.1344\\
$4$&7.0503&4.8386&12.4926&3.6117&1.3999&9.0539
\\
\end{tabular}
\end{center}
\end{table}
 It is evident that the range of admissible prices is quite large and therefore in practice it could be more useful consider the range of prices in a class of perturbations of $\tilde{Q}$.
 We define a class $\Gamma(\tilde{Q}, \epsilon)$ of measures in $\dym$ with relative entropy with respect to $\tilde{Q}$ smaller than $\epsilon$. To choose $\epsilon$ we look at the distribution of the relative entropy $I(\tilde{Q}, Q)$ across $\dym$, that is  shown in Figure \ref{relentr5}. The maximal simulated relative entropy is  $I_M(\tilde{Q}, Q)=1.1810$, the minimal is  $I_m(\tilde{Q}, Q)=0$. We choose $\epsilon=0.12$ that corresponds to the $10\%$ of the range spanned by the entropy across $\dym$.

\begin{figure}[h!]
\caption{Relative entropy $I(\tilde{Q},Q)$ across $\dym$.}\label{relentr5}\centering
\includegraphics[width=0.4\linewidth]{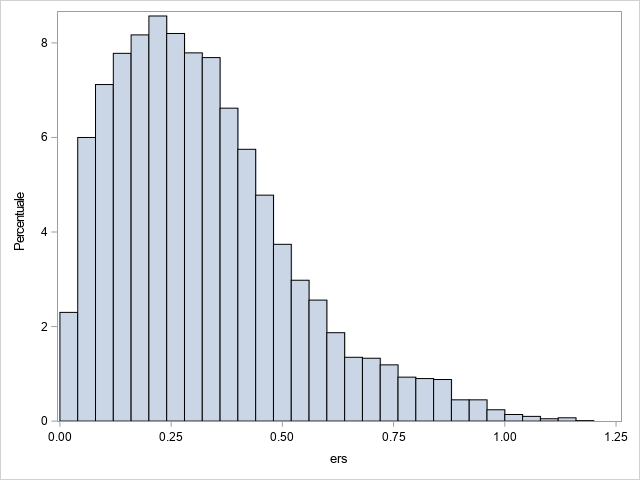}
\end{figure}

 Table \ref{MMboundPer} provides the  maximum and minimum call prices  on the set of  MEMM perturbations $\Gamma(\tilde{Q}, 0.12)$, i.e the class of equivalent martingale measures whose relative entropy with respect to $\tilde{Q}$  is smaller that $0.12$.
\begin{table}[h!]
\begin{center}\caption{Price bounds on $\Gamma(\tilde{Q}, 0.12)$.}\label{MMboundPer}
\begin{tabular}{r|r |r  }
T&Max call price& Min call price \\
\hline
1&    4.2224& 2.5973\\
2&6.0540&3.8092\\
3&7.4961&4.9400\\
4&8.7721&5.9766
\end{tabular}
\end{center}
\end{table}

 Table \ref{MMboundPer} shows that even  small perturbation of the risk-neutral measures have an  impact on prices. However, comparing Table \ref{MMboundPer} and Table \ref{prices} the range spanned by prices across $\Gamma(\tilde{Q}, 0.12)$ is significantly smaller than the range spanned across the whole class.  Properly define the class $\Gamma$ of plausible alternatives could be informative  in real world application.
%
%
%

\section{Conclusion and  future research}\label{credit}

The geometrical representation of discrete distributions  with a given mean presented here improves  our knowledge of discrete models.
In this paper, we show its usefulness for measuring model risk in discrete models and we develop a first application to pricing of derivatives: we consider the  model  risk that arises from a wrong choice of the risk-neutral measure used to price derivatives in incomplete markets, among a class of plausible alternatives.

Part of our ongoing research   is devoted to another application of the geometrical structure of discrete models. We are generalizing  the model risk analysis performed in  \cite{fontana2021model} to the case of unbalanced portfolios.
Consider a credit portfolio $P$ with $d$ obligors.
The components of random variable $\boldsymbol{X}=(X_1,
\ldots, X_{d})$ are the default indicators for the portfolio $P$ and we assume that they have the same  Bernoulli marginal distribution with mean $p$: no assumptions are made on their dependence.

To model the loss of a credit risk portfolio $P$ of $d$ obligors we consider
the sum of the percentage individual losses
\begin{equation*}
L=\sum_{i=1}^{d}w_{i}X_{i},
\end{equation*}
where $w_{i}\in (0,1]$ and $\sum_{1=1}^{d}w_{i}=1$. If we
assume that the weights are given we can measure the  risk associated with  the distribution of joint defaults. In fact, we can move the distribution of $L$ leaving its mean and its support fixed. Let $\mathcal{L}_w$ the class of discrete distributions of the losses for a given $w=(w_1,\ldots, w_d)$.

\begin{definition}
Let $Y$ be a random variable representing a loss with finite mean. Then the $%
\text{VaR}_{\alpha}$ at level $\alpha $ is defined by
\begin{equation*}
\text{VaR}_{\alpha}(Y)=\inf \{y\in \mathbb{R}:P(Y\leq y)\geq \alpha \}.
\end{equation*}%

\end{definition}

The case of equal weights has been discussed in
\cite{fontana2021model}, where analytical  bounds for the VaR are reached on the extremal points.
In this special case the relevant quantity is the number of defaults, a discrete distribution with support on $\{0,\ldots, d\}$. The results discussed in  Section \ref{Sec:distrrisk} allow us to consider unbalanced portfolios and, given the weights $w_i$ that define the support of the loss, we can measure  how by changing the joint  distribution of defaults  in  a given class we affect $\text{VaR}$ of the loss, according to \eqref{bounds1}, that in this case becomes:
\begin{equation}\label{muvar}
\mu_{\mathcal{L}}(VaR, \alpha)=max_{L\in \mathcal{L}}VaR_{\alpha}(L)-min_{L\in \mathcal{L}}VaR_{\alpha}(L).
\end{equation}

%
%

Since  methodology developed in \cite{fontana2021exchangeable} works in high dimension, an empirical investigation on  real data is planned.
After  estimating a distribution for the loss on real data, using for example a Bernoulli mixture model, as  in  \cite{doria2022machine}, we can  consider an unbalanced credit portfolio (e.g. credit cards) and then  we can measure the risk associated with the estimated joint default distribution, as discussed above.

%
%
 \newpage
\appendix
\section{Pentanomial lattice}\label{Pentanomial}
Let $\hat{\mu}$, $\hat{\sigma}$, $\hat{s}$ and $\hat{k}$ be the empirical annualized mean, variance, skewness and excess kurtosis of  returns.
The relevant quantities to construct a pentanomial lattice  are the estimated jump amplitudes $a_i,\, i=1,\ldots, 5$ and historical probability $P$, that are provided in  \cite{yamada2004properties}.  In \cite{yamada2004properties} also $u$ and $d$ are  derived, here we recall the jump amplitudes that are given by:
\begin{equation}\label{Ja}
\left(
\begin{array}
[c]{c}
a_1\\
a_2\\
a_3\\
a_4\\
a_5\\
\end{array}
\right)=\left(
\begin{array}
[c]{c}
\exp\{\hat{\mu}+2\hat{\sigma}\sqrt{1+\frac{\hat{k}}3}\}\\
\exp\{\hat{\mu}+\hat{\sigma}\sqrt{1+\frac{\hat{k}}3}\}\\
\exp\{\hat{\mu}\}\\
\exp\{\hat{\mu}-\hat{\sigma}\sqrt{1+\frac{\hat{k}}3}\}\\
\exp\{\hat{\mu}-2\hat{\sigma}\sqrt{1+\frac{\hat{k}}3}\}\\
\end{array}
\right)
\end{equation}
and the corresponding  historical  probabilities are given by:
\begin{equation}\label{Jp}
\left(
\begin{array}
[c]{c}
p_1\\
p_2\\
p_3\\
p_4\\
p_5\\
\end{array}
\right)=\left(
\begin{array}
[c]{c}
\frac{3+\hat{k}+\hat{s}\sqrt{9+3\hat{k}}}{4(3+\hat{k})^2}\\
\frac{3+\hat{k}-\hat{s}\sqrt{9+3\hat{k}}}{2(3+\hat{k})^2}\\
\frac{3+2\hat{k}}{2(3+\hat{k})}\\
\frac{3+\hat{k}+\hat{s}\sqrt{9+3\hat{k}}}{4(3+\hat{k})^2}\\
\frac{3+\hat{k}-\hat{s}\sqrt{9+3\hat{k}}}{4(3+\hat{k})^2}\\
\end{array}
\right),
\end{equation}
For the construction of the lattice,  the up and down rates and the derivation of the above quantities see \cite{yamada2004properties}.

\section{Uniform sampling}\label{US}
To simulate a functional on a class $\mathcal{P}\subseteq \dym$ we  use uniform sampling following \cite{fontana2021exchangeable}.
We  partition $\dym$ into simplices  $\mathcal{T}_i, \, i\in \mathcal{I}$ (e.g. using a Delaunay triangulation)
where $\mathcal{I}$ is a proper set of indices, and $\mathcal{T}_i \cap \mathcal{T}_j = \emptyset$ for $i \neq j$.
Let $\phi(y)=\phi_d(\xx)$, where $y=\sum_{i=1}^d x_i$, $y=\{0,\ldots, d\}$. Let us denote by $F_\phi$ the distribution of $E^P[\phi(Y)]$, where $P\in\mathcal{P}$. We get
\begin{equation}\label{prob_tot}
F_\phi(t)=P(E^P[\phi(Y)] \leq t)=\sum_{i \in \mathcal{I}}P( \mathcal{T}_i) P(E^P[\phi(Y)] \leq t| \mathcal{T}_i).
\end{equation}

If we assign a uniform measure on the space $\mathcal{P}$ the probability $P( \mathcal{T}_i) $ of sampling a probability mass function in the simplex $\mathcal{T}_i$ is simply the ratio between the volume of  $\mathcal{T}_i$ and the total volume of $\mathcal{P}$, i.e.
\begin{equation} \label{volsim}
P(\mathcal{T}_i)=\frac{\text{vol}(\mathcal{T}_i)}{ \text{vol}(\mathcal{\mathcal{P}})}.
\end{equation}
The volume of each $\mathcal{T}_i$ and consequently that of $\mathcal{P}$ can be easily computed because the volume  of an $n$-simplex in $n$-dimensional space with vertices $(v_0, \ldots, v_n)$ is
\[
{\displaystyle \left|{1 \over n!}\det {\begin{pmatrix}v_{1}-v_{0},&v_{2}-v_{0},&\dots ,&v_{n}-v_{0}\end{pmatrix}}\right|}
\]
where each column of the $n \times n$ determinant is the difference between the vectors representing two vertices \cite{stein1966note}.

The probability $P(E[\phi(Y)] \leq t| \mathcal{T}_i)$ is the ratio between the volume of the region
 \begin{equation}\label{region}\mathcal{R}_{i,t}=\{p_Y \in \mathcal{T}_i: E[\phi(Y)] \leq t\}\end{equation}
 and the volume of $\mathcal{T}_i$, i.e.
\begin{equation} \label{volfrustum}
P(E[\phi(Y)] \leq t| \mathcal{T}_i)=\frac{\text{vol}(\mathcal{R}_{i,t})}{\text{vol}(\mathcal{T}_i)}.
\end{equation}
The computation of the volume of $\mathcal{R}_{i,t}$ will depend on the definition of $\phi$ in the expectation measure $E[\phi(Y)]$.
We
 uniformly sample at random over  $\mathcal{T}_i$ and determining the relative frequency of the points that fall in the region $\mathcal{R}_{i,t}$, as defined in \eqref{region}
\[
\widehat{\left( \frac{\text{vol}(\mathcal{R}_{i,t})}{\text{vol}(\mathcal{T}_i)} \right)}=\frac{\#\{p_k \in \mathcal{R}_{i,t}, \, k=1,\ldots, N \}}{N},
\]
where $N$ is the size of the sample. In these cases an estimate $\hat{F_\phi}$ of the distribution $F_\phi$ will be obtained.

\bibliographystyle{ieeetr}
\bibliography{biblio}

\begin{thebibliography}{10}

\bibitem{breuer2016measuring}
T.~Breuer and I.~Csisz{\'a}r, ``Measuring distribution model risk,'' {\em
  Mathematical Finance}, vol.~26, no.~2, pp.~395--411, 2016.

\bibitem{bellini2021law}
F.~Bellini, P.~Koch-Medina, C.~Munari, and G.~Svindland, ``Law-invariant
  functionals on general spaces of random variables,'' {\em SIAM Journal on
  Financial Mathematics}, vol.~12, no.~1, pp.~318--341, 2021.

\bibitem{fontana2021model}
R.~Fontana, E.~Luciano, and P.~Semeraro, ``Model risk in credit risk,'' {\em
  Mathematical Finance}, vol.~31, no.~1, pp.~176--202, 2021.

\bibitem{cox1979option}
J.~C. Cox, S.~A. Ross, and M.~Rubinstein, ``Option pricing: A simplified
  approach,'' {\em Journal of financial Economics}, vol.~7, no.~3,
  pp.~229--263, 1979.

\bibitem{kamrad1991multinomial}
B.~Kamrad and P.~Ritchken, ``Multinomial approximating models for options with
  k state variables,'' {\em Management science}, vol.~37, no.~12,
  pp.~1640--1652, 1991.

\bibitem{ssebugenyi2013minimal}
C.~S. Ssebugenyi, I.~J. Mwaniki, and V.~S. Konlack, ``On the minimal entropy
  martingale measure and multinomial lattices with cumulants,'' {\em Applied
  Mathematical Finance}, vol.~20, no.~4, pp.~359--379, 2013.

\bibitem{cont2006model}
R.~Cont, ``Model uncertainty and its impact on the pricing of derivative
  instruments,'' {\em Mathematical {F}inance}, vol.~16, no.~3, pp.~519--547,
  2006.

\bibitem{de1997computational}
M.~De~Berg, M.~Van~Kreveld, M.~Overmars, and O.~Schwarzkopf, {\em Computational
  geometry}.
\newblock Springer, 1997.

\bibitem{fontana2021exchangeable}
R.~Fontana and P.~Semeraro, ``Exchangeable bernoulli distributions: High
  dimensional simulation, estimation, and testing,'' {\em Journal of
  Statistical Planning and Inference}, vol.~225, pp.~52--70, 2023.

\bibitem{shaked2007stochastic}
M.~Shaked and J.~G. Shanthikumar, {\em Stochastic orders}.
\newblock Springer, 2007.

\bibitem{denuit1998optimal}
M.~Denuit and C.~Vermandele, ``Optimal reinsurance and stop-loss order,'' {\em
  Insurance: Mathematics and Economics}, vol.~22, no.~3, pp.~229--233, 1998.

\bibitem{eberlein1997range}
E.~Eberlein and J.~Jacod, ``On the range of options prices,'' {\em Finance and
  Stochastics}, vol.~1, no.~2, pp.~131--140, 1997.

\bibitem{frittelli1995minimal}
M.~Frittelli, ``Minimal entropy criterion for pricing in one period incomplete
  markets,'' {\em Dept. Metodi Quantitativi Tech. Report}, no.~99, 1995.

\bibitem{ekstrom2004properties}
E.~Ekstr{\"o}m, ``Properties of {A}merican option prices,'' {\em Stochastic
  Processes and their Applications}, vol.~114, no.~2, pp.~265--278, 2004.

\bibitem{yamada2004properties}
Y.~Yamada and J.~A. Primbs, ``Properties of multinomial lattices with cumulants
  for option pricing and hedging,'' {\em Asia-Pacific Financial Markets},
  vol.~11, no.~3, pp.~335--365, 2004.

\bibitem{doria2022machine}
M.~Doria, E.~Luciano, and P.~Semeraro, ``Machine learning techniques in joint
  default assessment,'' {\em arXiv preprint arXiv:2205.01524}, 2022.

\bibitem{stein1966note}
P.~Stein, ``A note on the volume of a simplex,'' {\em The American Mathematical
  Monthly}, vol.~73, no.~3, pp.~299--301, 1966.

\end{thebibliography}

\end{document}